\title[Predictive CBFs]{Learning for Layered Safety-Critical Control with \\ Predictive Control Barrier Functions}
\newtheorem*{untheorem}{Theorem}
\begin{document}

\maketitle

\vspace{-0.5cm}
\begin{abstract}%
Safety filters leveraging control barrier functions (CBFs) are highly effective for enforcing safe behavior on complex systems.  It is often easier to synthesize CBFs for a Reduced order Model (RoM), and track the resulting safe behavior on the Full order Model (FoM)---yet gaps between the RoM and FoM can result in safety violations. This paper introduces \emph{predictive CBFs} to address this gap by leveraging rollouts of the FoM to define a predictive robustness term added to the RoM CBF condition.  Theoretically, we prove that this guarantees safety in a layered control implementation.  Practically, we learn the predictive robustness term through massive parallel simulation with domain randomization.  We demonstrate in simulation that this yields safe FoM behavior with minimal conservatism, and experimentally realize predictive CBFs on a 3D hopping robot. 

\end{abstract}

\begin{keywords}%
  safety, control barrier functions, reduced order models, supervised learning%
\end{keywords}

\section{Introduction}

Safety is a primary concern in the construction of modern control architectures. Control Barrier Functions (CBFs) provide a theoretically grounded method for achieving safe behaviors of complex nonlinear systems \citep{ames2014control}. CBFs have several desirable properties, including nominal robustness properties \citep{xu2015robustness} and the ability to filter desired system inputs for safety (safety filtering) \citep{ames2016control,ames2019control}. These benefits have established CBFs and safety filters as practical tools, commonly used in collision avoidance, and other applications, on various robotic systems \citep{wabersich2023data}. 
Yet synthesizing CBFs for arbitrary (often even simple) safety constraints is difficult and impractical for high-dimensional or complex nonlinear systems. 
Significant prior work exists surrounding constructive methods for CBF synthesis, leveraging different system properties, such as relative degree \citep{nguyen2016exponential}, differential flatness \citep{wang2017safe}, HJ reachability \citep{tonkens2022refining}, backup controllers \citep{chen2021backup}, or learning \citep{robey2020learning}.
However, complex systems often will not satisfy required assumptions, and high dimensionality cripples learning or PDE-based methods. 
Hence, constructive synthesis of CBFs for complex, high-order systems is an open problem \citep{cohen2024safety}. 

One recent approach to addressing this problem is through layered control architectures \citep{matni2024quantitative}, which leverage reduced order models in conjunction with existing controllers lower in the control stack.
Often, safety constraints depend on a subset of system states; for instance, collisional avoidance problems typically only depend on kinematic configuration, and safety requirements for mobile robots often concern their center of mass positions.
A reduced order model (RoM) approximates the dynamics of the full order model (FoM) while capturing all of the states relevant for safety.
CBF synthesis can then be carried out on the RoM, and resulting behaviors can be tracked by a tracking controller on the FoM.
Under various assumptions regarding the tracking controller, e.g., exponential tracking \citep{Molnar2022, singletary2022safety} or the existence of a simulation function \citep{cohen2025safety}, safety can be extended to the FoM from a reduced set of initial conditions. 
Yet certifying the assumptions on the tracking controller is, in itself, a hard problem---and calculating the corresponding certificates is even more challenging. 
In practice, terms in the resulting safety filters are conservatively approximated to achieve safety empirically.

The use of RoMs to construct CBFs for safety parallels the planner-tracker paradigm, where trajectories are optimized over a RoM and then tracked by a tracking controller on the FoM.
Methods in this framework focus primarily on establishing tracking error bounds, through HJ Reachability \citep{mitchell2005time}, sum of squares programming \citep{schweidel2022safe}, or contraction theory \citep{singh2023robust}.
Once a tracking error bound has been established, tools from tube MPC \citep{langson2004robust, compton2025dynamic} or search-based planning \citep{chen2021fastrack} can be used to construct safe trajectories of the FoM.
As these certification problems scale poorly with system complexity and dimensionality, we focus instead on leveraging the simplicity of the layered CBF paradigm with RoMs coupled with the power of a predictive horizon to achieve safety on the FoM.

\begin{figure}
    \centering
    \includegraphics[width=\textwidth]{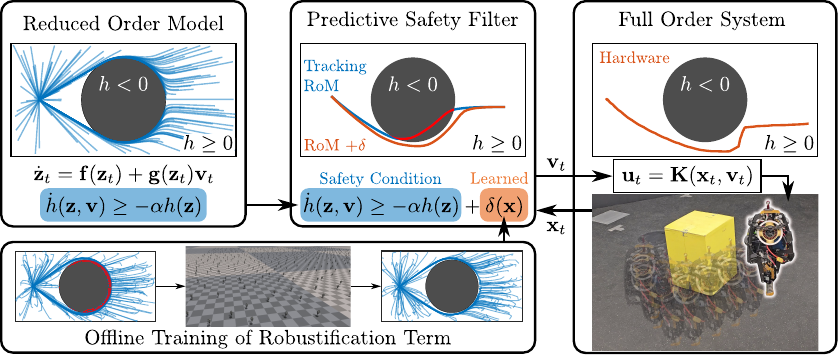}
    \caption{\vspace{-0.9cm}First, a CBF is synthesized on a RoM; tracking errors from the FoM can lead to unsafe behavior. We learn a structured robustification term $\delta(\b x)$ from massively parallel simulation, and deploy the robustified CBF on hardware to obtain safe behaviors.\vspace{-1.0cm}}
    \label{fig:hero}
\end{figure}

\subsection{Our Contribution: }
Rather than relying upon restrictive and difficult-to-verify tracking assumptions that induce conservative behavior, we learn a robustification of the CBF synthesized on the RoM to remove safety violations over a horizon-based rollout. 
This robustification term, guaranteed to exist under mild assumptions on tracking error, is conditioned on the FoM state. 
When added to the RoM safety filter, it induces a minimal constraint back-off to ensure safety of the FoM.  We both formally establish guarantees of safety, and leverage massively parallel simulation to learn this term with high accuracy.  
We demonstrate our method outperforming previous methods using RoMs for layered safety filtering by attaining safety under conditions where previous methods fail (due to violation of their underlying assumptions), or expanding the safe set when previous methods were successful.
Finally, we demonstrate our method on a complex robotic system, a 3D hopping robot ARCHER, achieving safe navigation of cluttered environments on hardware.

\newpage 

\section{Background}

We begin by defining a framework for layered control systems consisting of a reduced order model (RoM) and full order model (FoM), interacting through a safety filter (as illustrated in Fig. \ref{fig:architecture}).

\begin{wrapfigure}[19]{L}{0.52\textwidth} 
  \vspace{-0.5cm} 
  \begin{center}
  \begin{scriptsize}
  \begin{tikzpicture}[
  mynode/.style = {draw, text width=0.415\textwidth, align=center, minimum height=0.8cm}]
  
  \node[mynode] (high) {\textbf{Reduced order Model (RoM)}
  \vspace{-0.3cm}
    \begin{align*}
    \textrm{Dynamics:} \quad & \dot{\b z}_t = \b f(\b z_t) + \b g(\b z_t) \b v_t, \\ 
    \textrm{Safety:} \quad & h(\b z_t) \geq 0, \quad \forall t \geq 0.
    \end{align*}
  };
    
  \node[mynode, below=of high] (mid) {\textbf{Predictive CBF}
  \vspace{-0.3cm}
    \begin{align*}
    \b k_{sf}^{\delta}(\b z, \b x) = & \argmin _{\b v \in \cal V} \quad \frac{1}{2}\| \b v - \b k(\b z) \|^2 \\
    \mathrm{s.t.} \quad  & \underbrace{\L_f h(\b z)  + \L_g h(\b z)  \b v \geq -\alpha h(\b z)}_{\textrm{RoM CBF Condition}}+ \hspace{-0.1cm} \underbrace{\delta(\b x)}_{\textrm{Learned}}
    \end{align*}
  };

  \node[mynode, below=of mid] (low) {\textbf{Full order Model (FoM)}
  \vspace{-0.3cm}
    \begin{align*}
    \textrm{Dynamics:} \quad & \dot{\b x}_t = \bF (\bx_t) + \bG(\bx_t) \b K(\b x_t,\b v_t), \\ 
    \textrm{Safety:} \quad & h(\Pi(\b x_t)) \geq 0, \quad \forall t \geq 0.
    \end{align*}
  };

  \begin{scope}[transform canvas={xshift=.6em}]
    \draw[latex-] (high) --  node[right] {$\begin{array}{l} \textbf{Top Interface} \\ \b v_t = \b k_{sf}^{\delta}(\b z_t, \b x_t) \end{array}$} ++(mid);
  \end{scope}
  \begin{scope}[transform canvas={xshift=-.6em}]
    \draw[-latex] (high) --  node[left] {$\b z_t $} ++(mid);
  \end{scope}

    \begin{scope}[transform canvas={xshift=.6em}]
        \draw[-latex] (mid) --  node[right] {$ 
    \begin{array}{l} \textbf{Bottom Interface} \\ 
\b v_t = \b k_{sf}^{\delta}(\bPi(\bx_t), \b x_t)
\end{array}
$  }  ++  (low);
    \end{scope}
    \begin{scope}[transform canvas={xshift=-.6em}]
    \draw[latex-] (mid) --  node[left] {$\b x_t $ }  ++  (low);
    \end{scope}

  \draw [latex-latex] (low.east) -- ++(0.8cm,0) |- (high) node[pos=0.25,sloped,above]{State Projection: $z = \Pi(x)$};
  \draw [latex-latex] (low.east) -- ++(0.8cm,0) |- (high) node[pos=0.25,sloped,below]{Input Projection: $v = \Psi(x)$};
  
  \end{tikzpicture}
  \vspace{-.4cm}
  \end{scriptsize}
  \end{center}
  \caption{Layered architecture of a RoM and FoM interacting through a Predictive CBF.}
  \label{fig:architecture}
\end{wrapfigure}
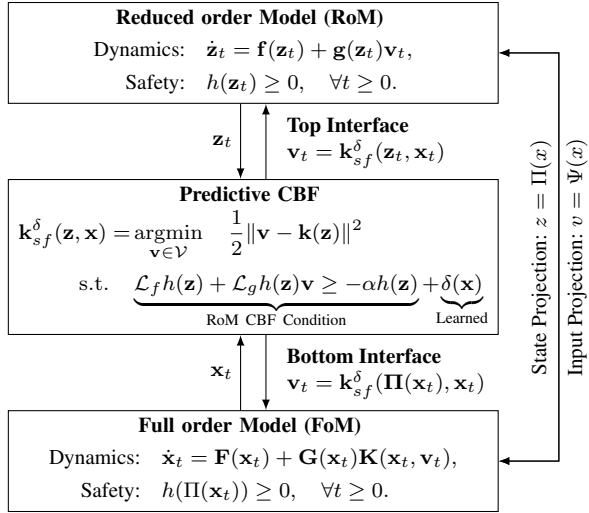
\newsec{Full Order Model (FoM):}  Consider a full order system representing the system of interest---typically these are very complex, high dimensional, and often not known.  We assume that the full-order system evolves according to the nonlinear control system: 
$$
\dot{\b x}_t =\bF (\bx_t) + \bG(\bx_t) \bu_t
$$
where $\bF: \cal X  \to \R^N$ and $\bG : \cal X \to \R^N \times \R^M$ are possibly \emph{unknown} (but can be simulated), and the state $\bx \in \cal X \subseteq \R^N$ and input $\bu \in \cal U \subseteq \R^M$ typically have high dimension.

\newsec{Reduced Order Model (RoM):}  Assume there is access to a reduced order model:
\begin{eqnarray}
\label{eqn:RoM}
\dot{\b z}_t = \b f(\b z_t) + \b g(\b z_t) \b v_t
\end{eqnarray}
with state $\b z \in \cal Z \subseteq \R^n$, input $\b v \in \cal V \subset \R^m$.

\newsec{Coupling between RoM and FOM:}  
The RoM is related to the FoM through a tracking controller---signals are sent from the RoM to the FoM and tracked by an existing (low-level) controller: $\b u = \b K(\b x,\b v)$.  The result is the partially closed-loop full-order dynamics: 
\begin{eqnarray}
\label{eqn:FoMcl}
\dot{\b x}_t = \b F_{\mathrm{cl}} (\b x_t, \b v_t) := 
\bF (\bx_t) + \bG(\bx_t) \bK(\b x_t,\b v_t)
\end{eqnarray}
The RoM is related to the FoM through a projection map: $\b \Pi: \cal X \to \cal Z$. Therefore, given a feedback controller for the RoM, $\b v = \b k(\b z)$, the result is the closed loop dynamics on the FoM: 
\begin{eqnarray}
\label{eqn:FoMclcl}
\dot{\bx}_t = \bF_{\mathrm{cl}}(\bx_t, \bk(\bPi(\bx_t))) = 
\bF (\bx_t) + \bG(\bx_t) \bK(\bx_t,\bk(\bPi(\bx_t)))
\end{eqnarray}
We are interested in the behavior of this closed-loop system, especially in the context of safety. 

\newsec{Safety:}  Consider a safety constraint for the RoM encoded by a smooth function $h : \cal Z  \to \R$: 
\begin{eqnarray}
\C_{\mathrm{RoM}} = \{ \b z \in \cal Z \subseteq \R^n ~ : ~ h(\b z) \geq 0\}, 
\end{eqnarray}
with $\partial\C_{\mathrm{RoM}} = h^{-1}(0)$. For the FoM, this safety constraint takes the form: 
\begin{eqnarray}
\C_{\mathrm{FoM}} = \{ \b x \in \cal X \subseteq \R^N ~ : ~ h( \b \Pi(\b x)) \geq 0 \}. 
\end{eqnarray}
The goal is to enforce safety on the FoM through a controller applied to the RoM.  That is: 

\begin{problem}\label{prob:main}
    Given a RoM \eqref{eqn:RoM}, a closed-loop FoM \eqref{eqn:FoMcl} and a safe set $\C_{\mathrm{FoM}}$, find a controller for the RoM $\b v = \b k(\b z)$ and a set $\mc{S} \subseteq \C_{\mathrm{FoM}}$ such that $\bx_0 \in\mc{S}$ implies that $\bx_t \in C_{\mathrm{FoM}}$ for all $t \geq 0$. 
\end{problem}

\newsec{Safety on RoMs with CBFs:}  There are well-established results on using RoMs to guarantee safety on the FoM.  Yet these only guarantee safety under the assumption of exponential tracking \textit{without} steady-state error---which never holds in practice. 
The goal of this paper is to relax this conservatism using predictive CBFs.  First, we summarize the key result for safety on FoMs using RoMs. 

Assume that $h$ is a \emph{Control Barrier Function (CBF)} \citep{ames2016control} for the RoM \eqref{eqn:RoM}:
\begin{eqnarray}
\label{eqn:CBF}
\sup_{\b v \in \cal V} \; \Big[\dot{h}(\b z, \b v) = \underbrace{\pdv{h}{\bz} \Big \vert_{\bz}  \b f(\b z)}_{:=\L_f h(\b z)} + \underbrace{\pdv{h}{\bz} \Big \vert_{\bz}  \b g(\b z)}_{:=\L_g h(\b z)} \b v \Big] \geq - \alpha(h(\b z)) ,
\end{eqnarray}
for all $\b z \in \cal Z$, where $\alpha$ is an extended class $\cal K$ function.  The result is a \emph{safety filter} for the RoM: 
\begin{align}
\label{eqn:safetyfilter}
\b k_{\mathrm{sf}}(\b z) = \argmin _{\b v \in \cal V} &  \quad \frac{1}{2}\| \b v - \b k(\b z) \|^2 \\
\mathrm{s.t.} \quad  &  \quad \L_f h(\b z)  + \L_g h(\b z)  \b v \geq -\alpha h(\b z)  \nonumber
\end{align}
where for simplicity we take $\alpha(h(\b z)) = \alpha h(\b z)$ for $\alpha \in \R_{\geq 0}$. 

This can guarantee safety for the FoM under the proper set of assumptions on the RoM and the interface between the RoM and FoM.  Namely, ``relative degree'' and ``boundedness'' assumptions:


\begin{assumption}
\label{asm:projection}
There is a projection $\bPsi : \cal X \to \cal V$ from the FoM state to the RoM inputs such that: 
\begin{equation}
    \pdv{\bPi}{\bx} \Big \vert_{\bx} \bF(\bx) = \bf(\bPi(\bx)) + \bg(\bPi(\bx))\bPsi(\bx), \qquad 
    \pdv{\bPi}{\bx} \Big \vert_{\bx} \bG(\bx)\equiv \bzero.
\end{equation}
\end{assumption}

\begin{assumption}
\label{asm:bounded}
    $\| \L_g h(\bPi(\bx)) \| \leq C_h$, for $C_h \in\R_{> 0}$ and all $\bx \in \mathcal C_{\mathrm{FoM}}$. 
\end{assumption}

Assumption \ref{asm:projection} and \ref{asm:bounded} are satisfied for all of the layered systems considered in this paper. 

\begin{remark}
Assumption \ref{asm:projection} is often satisfied by robotic systems.  In particular, in this setting, the FoM is described by: $\bD(\bq) \ddot{\bq} + \bH(\bq,\dot{\bq}) = \bB \bu$ where $\bq \in \Q \subset \R^n$ is the configuration variables (positions and angles) and $\dot{\bq} \in T_{\bq}\Q $ are the velocities and $\bu \in \R^M$ the control inputs.  The corresponding FoM state is therefore: $\bx = (\bq,\dot{\bq}) \in \X \subset T \Q \subset \R^{2n}$.  In this setting RoM are often kinematic---the most prevalent of which is the single integrator: $\dot{\bz} = \dot{\bq} = \bv$.  It follows that $\cal Z \subset \R^n$ and $\cal V \subset \R^n$.  The projections are therefore $\bPi: \X \to \cal Z$ and $\bPsi : \X \to \cal V$ given in coordinates by: $ \bPi(\bq,\dot{\bq}) = \bq$ and $ \bPsi(\bq,\dot{\bq}) = \dot{\bq}$.  It follows that $\pdv{\bPi}{\bx} \Big \vert_{\bx} \bF(\bx) = \dot{\bq} = \dot{\bz}$ and $\pdv{\bPi}{\bx}(\bx)\bG(\bx)\equiv \bzero$ as desired. 
\end{remark}

Finally, assume that the reference signal produced by the safety filter, $\bv = \b k_{\mathrm{sf}}(\b z)$ can be tracked exponentially by the FoM.  This is encoded by a positive definite \textbf{tracking function} $V : \X \to \R_{\geq 0}$: 
\begin{eqnarray}
\label{eqn:trackingfunction}
\rho \| \bPsi(\bx) -  \bk_{\mathrm{sf}}(\bPi(\bx)) \| & \leq & V(\bx),\qquad \rho\in\R_{>0}, \\
 \dot{V}(\bx_t) = \pdv{V}{\bx} \Big \vert_{\bx_t}   \bF_{\mathrm{cl}}(\bx_t, \bk_{\mathrm{sf}}(\bPi(\bx_t))) & \leq & - \lambda V(\bx_t), \qquad \lambda \in \R_{> 0}. \nonumber
\end{eqnarray}
The main result of \citep{Molnar2022} establishes the safety of the FoM through safety filters on the RoM, per Problem \ref{prob:main}, as summarized by the following (stated in the notation of this paper).

\begin{theorem}
\label{eqn:thmbackground}
Consider a RoM \eqref{eqn:RoM}, a closed-loop FoM \eqref{eqn:FoMcl} and a safe set $\C_{\mathrm{RoM}}$ satisfying Assumptions \ref{asm:projection} and \ref{asm:bounded}. 
For the safety filter $\bv = \b k_{\mathrm{sf}}(\b z)$ in \eqref{eqn:safetyfilter}, if there exists a tracking function $V$ satisfying \eqref{eqn:trackingfunction} with $\lambda \geq \alpha_x$ for any $\alpha_x > \alpha$, then the FoM is safe: 
\begin{eqnarray}
\bx_0 \in \cal{S} := \left\{ 
\bx \in \X  \colon (\alpha_x - \alpha) h(\bPi(\bx)) - \frac{C_h}{\rho} V(\bx) \geq 0 \right\}
\quad \implies \quad 
\bx_t \in \C_{\mathrm{FoM}}, \:\: \forall t \geq 0. 
\end{eqnarray}
\end{theorem}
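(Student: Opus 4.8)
The plan is to collapse the two-layer problem onto a single scalar certificate for the FoM. Define
\begin{eqnarray*}
\phi(\bx) \;:=\; h(\bPi(\bx)) - \frac{C_h}{\rho(\lambda-\alpha)}\,V(\bx),
\end{eqnarray*}
and show three things: (i) $\{\bx : \phi(\bx)\ge 0\}\subseteq \C_{\mathrm{FoM}}$, which is immediate since $V\ge 0$, $C_h,\rho>0$, $\lambda>\alpha$; (ii) $\cal S \subseteq \{\phi\ge 0\}$, using $\alpha < \alpha_x \le \lambda$; and (iii) $\{\phi\ge 0\}$ is forward invariant under \eqref{eqn:FoMclcl} with $\bv_t = \bk_{\mathrm{sf}}(\bPi(\bx_t))$, by establishing the differential inequality $\dot\phi(\bx_t) \ge -\alpha\,\phi(\bx_t)$. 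Combining the three yields the claim via the comparison lemma. For (ii): $\bx_0\in\cal S$ gives $(\alpha_x-\alpha)h(\bPi(\bx_0))\ge \tfrac{C_h}{\rho}V(\bx_0)\ge 0$, hence $h(\bPi(\bx_0))\ge 0$; then since $\lambda-\alpha\ge\alpha_x-\alpha>0$ we get $h(\bPi(\bx_0))\ge \tfrac{C_h}{\rho(\alpha_x-\alpha)}V(\bx_0)\ge \tfrac{C_h}{\rho(\lambda-\alpha)}V(\bx_0)$, i.e.\ $\phi(\bx_0)\ge 0$.

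The core computation is (iii). First, differentiate $t\mapsto h(\bPi(\bx_t))$ along \eqref{eqn:FoMclcl}: by the chain rule and Assumption~\ref{asm:projection} (the term with $\bG$ vanishes, and $\pdv{\bPi}{\bx}\bF = \bf\circ\bPi + (\bg\circ\bPi)\bPsi$), this equals $\L_f h(\bPi(\bx_t)) + \L_g h(\bPi(\bx_t))\,\bPsi(\bx_t)$. Add and subtract $\L_g h(\bPi(\bx_t))\,\bk_{\mathrm{sf}}(\bPi(\bx_t))$: the first group $\L_f h(\bPi(\bx_t)) + \L_g h(\bPi(\bx_t))\,\bk_{\mathrm{sf}}(\bPi(\bx_t))$ is bounded below by $-\alpha h(\bPi(\bx_t))$ by feasibility of the safety-filter QP \eqref{eqn:safetyfilter} (nonempty since $h$ is a CBF on all of $\cal Z$, evaluated at $\bz=\bPi(\bx_t)\in\cal Z$); the residual $\L_g h(\bPi(\bx_t))\big(\bPsi(\bx_t) - \bk_{\mathrm{sf}}(\bPi(\bx_t))\big)$ is bounded below, via Cauchy--Schwarz, Assumption~\ref{asm:bounded}, and the first inequality of \eqref{eqn:trackingfunction}, by $-C_h\|\bPsi(\bx_t)-\bk_{\mathrm{sf}}(\bPi(\bx_t))\| \ge -\tfrac{C_h}{\rho}V(\bx_t)$. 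Hence $\tfrac{d}{dt}h(\bPi(\bx_t)) \ge -\alpha h(\bPi(\bx_t)) - \tfrac{C_h}{\rho}V(\bx_t)$. Now subtract $\tfrac{C_h}{\rho(\lambda-\alpha)}\dot V(\bx_t)$ and use the second inequality of \eqref{eqn:trackingfunction}, $\dot V \le -\lambda V$ (coefficient positive): the resulting $V$-terms combine as $\big(\tfrac{C_h\lambda}{\rho(\lambda-\alpha)} - \tfrac{C_h}{\rho}\big)V(\bx_t) = \tfrac{\alpha C_h}{\rho(\lambda-\alpha)}V(\bx_t)$, giving exactly $\dot\phi(\bx_t) \ge -\alpha h(\bPi(\bx_t)) + \tfrac{\alpha C_h}{\rho(\lambda-\alpha)}V(\bx_t) = -\alpha\,\phi(\bx_t)$. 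The comparison lemma then gives $\phi(\bx_t)\ge \phi(\bx_0)e^{-\alpha t}\ge 0$, so $h(\bPi(\bx_t))\ge \tfrac{C_h}{\rho(\lambda-\alpha)}V(\bx_t)\ge 0$, i.e.\ $\bx_t\in\C_{\mathrm{FoM}}$.

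The main technical obstacle is that Assumption~\ref{asm:bounded} bounds $\|\L_g h(\bPi(\bx))\|$ only on $\C_{\mathrm{FoM}}$, so the chain of inequalities above is a priori only valid while $\bx_t\in\C_{\mathrm{FoM}}$. The standard fix is to run the argument on the maximal interval $[0,T^*)$ on which $\bx_t\in\C_{\mathrm{FoM}}$: the computation shows $\phi(\bx_t)\ge 0$ there, hence $h(\bPi(\bx_t))\ge 0$, and by continuity $h(\bPi(\bx_{T^*}))\ge 0$ if $T^*<\infty$, contradicting maximality; so $T^*=\infty$ (modulo a mild well-posedness/forward-completeness hypothesis on \eqref{eqn:FoMclcl}, e.g.\ local Lipschitzness of $\bx\mapsto\bF_{\mathrm{cl}}(\bx,\bk_{\mathrm{sf}}(\bPi(\bx)))$ and compactness of the relevant $\phi$-sublevel set, which rule out finite-time escape). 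A minor point to record along the way is that $\bk_{\mathrm{sf}}(\bPi(\bx_t))\in\cal V$ by construction, so the tracking inequalities \eqref{eqn:trackingfunction}, which are stated for the reference $\bk_{\mathrm{sf}}$, apply verbatim. Everything else is routine.
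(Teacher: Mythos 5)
Your proposal is correct and follows essentially the same route as the paper: both construct a composite certificate of the form $c_1\,h(\bPi(\bx)) - c_2\,V(\bx)$, derive $\tfrac{d}{dt}h(\bPi(\bx_t)) \ge -\alpha h(\bPi(\bx_t)) - \tfrac{C_h}{\rho}V(\bx_t)$ via Assumption~\ref{asm:projection}, the add-and-subtract of $\L_g h\cdot\bk_{\mathrm{sf}}$, Cauchy--Schwarz with Assumption~\ref{asm:bounded}, and \eqref{eqn:trackingfunction}, and then close with the barrier inequality $\dot H \ge -\alpha H$. The only differences are cosmetic or additive: you normalize the coefficient to $\tfrac{C_h}{\rho(\lambda-\alpha)}$ so the residual cancels exactly (the paper keeps $(\alpha_x-\alpha)h - \tfrac{C_h}{\rho}V$ and drops the nonnegative leftover $\tfrac{C_h}{\rho}(\lambda-\alpha_x)V$), which makes your invariant set a superset of $\mc{S}$, and you explicitly patch the fact that Assumption~\ref{asm:bounded} only holds on $\C_{\mathrm{FoM}}$ via a maximal-interval argument, a point the paper glosses over.
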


\section{Predictive CBFs}
We now present the main concept of this paper: \emph{Predictive CBFs}.  These account for variation between the RoM and the FoM by finding a constant to buffer the CBF condition.  It does this in a minimally conservative fashion through a rollout step, i.e., through prediction.  We establish two key theoretic properties of Predictive CBFs: they exist, and under reasonable assumptions guarantee safety for both the RoM and the FoM. 
 Predictive CBFs, therefore, solve Problem \ref{prob:main}.

\subsection{Safety with Predictive CBFs}

Predictive CBFs forward evaluate the FoM to find a minimal robustness term, $\delta$, in which to buffer the safety condition for the RoM and thereby guarantee safety for the FoM.

\begin{definition}
  Given a RoM \eqref{eqn:RoM} and a FoM with tracking controller \eqref{eqn:FoMcl}, $h$ is a \textbf{Predictive CBF} if: 
  \begin{itemize}
      \item[(i)] $h$ is a CBF for the RoM \eqref{eqn:CBF}, 
      \item[(ii)] There exists a subset $\cal C_{\mathrm{FoM}}^{\delta} \subseteq \cal C_{\mathrm{FoM}}$ and a \textbf{predictor} $\delta : \mc{C}_{\mathrm{FoM}}^{\delta} \subset \cal X \to \R_{\geq 0}$ satisfying: 
  \begin{align}
\label{eqn:delta_optimization}
\delta(\b x)  = \min _{\delta \in \R_{\geq 0}} &  \quad \delta  & \\
\mathrm{s.t.}    &  
\quad \dot{h}(\b \Pi(\b x_t),\dot{\b  x}_t) \geq -\alpha_{\b x} h(\b \Pi(\b x_t)),    & \alpha_x \in \R_{\geq 0}, ~  \forall t \geq 0 \nonumber\\
 &  
 \quad 
\L_f h(\b z)  + \L_g h(\b z)  \b v^{\delta}(\b z) \geq -\alpha h(\b z) + \delta,   & \b z = \Pi(\b x)  \nonumber\\
 & 
 \quad \dot{\b  x}_t = \b  F_{\mathrm{cl}}(\b x_t, \b K(\b x_t,\b v^{\delta}(\b \Pi(\b x_t))),  &  \b x_0 = \b x .  \nonumber
\end{align}
\end{itemize}
The corresponding \textbf{predictive safety filter} is given by: 
\begin{align}
\label{eqn:predictive_safetyfilter}
\b k_{sf}^{\delta}(\b z, \b x) = \argmin _{\b v \in \cal V} &  \quad \frac{1}{2}\| \b v - \b k(\b z) \|^2 \\
\mathrm{s.t.} \quad  &  \quad \L_f h(\b z)  + \L_g h(\b z)  \b v \geq -\alpha h(\b z) + \delta(\b x). \nonumber
\end{align}
\end{definition}

\begin{remark}
The name ``predictive CBF'' has appeared in the literature in different forms than considered here, notably in \cite{breeden2022predictive} and \cite{wabersich2022predictive}.  In the former, trajectories are rolled out and the impact of the barrier function is considered.  In the latter, discrete time CBFs are considered in a model predictive control setting.  The notion considered here differs in that (1) it is considered for continuous-time systems, (2) the predictive nature of the CBF is in the CBF condition itself via $\delta(\bx)$, and (3) we consider Predictive CBFs in the setting of layered control systems to use prediction to encapsulate the difference, $\delta$, between the RoM and FoM. 

The closest method, philosophically, to Predictive CBFs is Projection to State Safety (PSS) \citep{taylor2020control}, which adds a delta term to the CBF condition to ensure approximate safety---input to state safety (ISSf).  
Yet this relied on learning $\delta$ from past trajectories, while PCBFs determine $\delta$ from future behaviors.  Formally, this guarantees safety rather than approximate safety (ISSf). 

Lastly, note that in situations where tracking assumptions of previous works, i.e. \cite{Molnar2022}, are satisfied, this algorithm returns $\delta=0$, but will potentially expand the FoM safe set.
\end{remark}

Importantly, it follows by definition that predictive CBFs formally guarantee safety on both the reduced order and full order models simultaneously solving Problem \ref{prob:main}.  Namely, the satisfaction of $\dot{h}(\b \Pi(\b x_t),\dot{\b  x}_t) \geq -\alpha_{\b x} h(\b \Pi(\b x_t))$ in \eqref{eqn:delta_optimization} implies $h(\bPi(\bx_t)) \geq 0$, cf. \citep{ames2016control}, yielding: 

\begin{theorem}[Predictive CBFs $\implies$ Safety]
  If $h$ is a predictive CBF, then the FoM is safe:
  \begin{eqnarray}
    \dot{\b x}_t = \b F_{\mathrm{cl}}(\b x_t,\b K(\b x_t,\b k_{sf}^{\delta}(\b  z_t,\b x_t))   & \qquad \implies \qquad  &   
    \textrm{if} ~  \b x_0 \in \cal C_{\mathrm{FoM}}^{\delta} \quad \textrm{then} \quad  \b x_t \in \cal C_{\mathrm{FoM}} \quad \forall t \geq 0. \nonumber
  \end{eqnarray}
\end{theorem}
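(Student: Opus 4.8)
The plan is to show that the theorem is an essentially immediate consequence of how the predictor $\delta$ is constructed, together with the standard comparison-lemma argument for forward invariance of CBF superlevel sets. First I would check that the closed loop is well posed: because $\b x_0 \in \cal C_{\mathrm{FoM}}^{\delta}$ the predictor $\delta(\b x_0)$ is defined and finite, so the predictive safety filter QP \eqref{eqn:predictive_safetyfilter} is feasible at $\b x_0$; granting mild regularity (e.g.\ $\b x \mapsto \b k_{sf}^{\delta}(\bPi(\b x),\b x)$ locally Lipschitz on $\cal C_{\mathrm{FoM}}^{\delta}$) the trajectory $\b x_t$ solving $\dot{\b x}_t = \b F_{\mathrm{cl}}(\b x_t,\b K(\b x_t,\b k_{sf}^{\delta}(\bPi(\b x_t),\b x_t)))$ exists, and the $\forall t \ge 0$ clause in the first constraint of \eqref{eqn:delta_optimization} is precisely what keeps it inside $\cal C_{\mathrm{FoM}}$, hence keeps $\delta$ defined along it.

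The crux is to show that along this closed-loop trajectory the full-order CBF inequality
\[
\dot h(\bPi(\b x_t),\dot{\b x}_t) \;\ge\; -\alpha_{\b x}\, h(\bPi(\b x_t))
\]
holds for every $t \ge 0$. I would fix a time $t$, write $\b x := \b x_t$, and observe that by \eqref{eqn:predictive_safetyfilter} the applied reduced-order input $\b v_t = \b k_{sf}^{\delta}(\bPi(\b x),\b x)$ is the minimizer of $\frac{1}{2}\|\b v - \b k(\bPi(\b x))\|^2$ subject to $\L_f h(\bPi(\b x)) + \L_g h(\bPi(\b x))\,\b v \ge -\alpha\, h(\bPi(\b x)) + \delta(\b x)$, i.e.\ it coincides with the feedback $\b v^{\delta(\b x)}$ evaluated at $\bPi(\b x)$ that appears in the rollout defining $\delta(\b x)$ in \eqref{eqn:delta_optimization}. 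Consequently $\dot{\b x}_t = \b F_{\mathrm{cl}}(\b x,\b K(\b x,\b v^{\delta(\b x)}(\bPi(\b x))))$ equals the time-zero derivative of that rollout started at $\b x$, and the first constraint of \eqref{eqn:delta_optimization} evaluated at its own initial time delivers exactly the displayed inequality.

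I would then finish with the comparison lemma. Setting $\overline h(t) := h(\bPi(\b x_t))$, which is absolutely continuous along solutions since $h$ and $\bPi$ are smooth and the closed-loop vector field is continuous, the previous step reads $\dot{\overline h}(t) \ge -\alpha_{\b x}\,\overline h(t)$, while $\b x_0 \in \cal C_{\mathrm{FoM}}^{\delta} \subseteq \cal C_{\mathrm{FoM}}$ gives $\overline h(0) \ge 0$. Gr\"onwall's inequality then yields $\overline h(t) \ge e^{-\alpha_{\b x} t}\,\overline h(0) \ge 0$ for all $t \ge 0$, i.e.\ $\b x_t \in \cal C_{\mathrm{FoM}}$ for all $t \ge 0$, which is the invariance argument of \citep{ames2016control}. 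The same reasoning applied to the second constraint of \eqref{eqn:delta_optimization}, a bona fide CBF inequality for $h$ on the RoM with linear class-$\cal K$ term $\alpha h$, gives $h(\b z_t)\ge 0$ on the RoM as well.

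The only genuine obstacle is bookkeeping rather than mathematics: one must pin down that the constant $\delta$ in the rollout \eqref{eqn:delta_optimization} is $\delta(\b x_t)$ evaluated at the current state, that the rollout feedback $\b v^{\delta}$ is identified with the safety-filter minimizer $\b k_{sf}^{\delta}$ of the same QP, and that closed-loop solutions exist in the first place --- which is exactly why some regularity of $\delta$ is required, and is furnished in practice by the smooth learned predictor. Once these points are fixed, the conclusion is forced by the definition, since $\delta(\b x)$ is by construction the minimal buffer that turns the closed-loop full-order system into one obeying a valid CBF inequality for $h\circ\bPi$.
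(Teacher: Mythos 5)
Your proof is correct and takes essentially the same route as the paper, whose entire justification for this theorem is one sentence: the first constraint of \eqref{eqn:delta_optimization} forces $\dot{h}(\b \Pi(\b x_t),\dot{\b x}_t) \geq -\alpha_x h(\b \Pi(\b x_t))$ along the closed loop, and the comparison argument of \citep{ames2016control} then gives $h(\b \Pi(\b x_t)) \geq 0$. Your extra bookkeeping --- identifying the time-zero derivative of the rollout started at $\b x_t$ with the actual closed-loop derivative, and noting the regularity and well-posedness needed for $\delta$ to remain defined along the trajectory --- is in fact more careful than the paper's own treatment.
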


\subsection{Existence of Predictive CBFs}
If we make a few key assumptions, it is possible to show that Predictive CBFs are well-defined.  These are analogous to the assumptions needed for Theorem \ref{eqn:thmbackground}, except that the addition of the $\delta$ term allows us to relax the tracking assumption to bounded tracking---this is critical in real-world applications where exact tracking is never achieved.  


Consider the reference signal produced by the predictive safety filter, $\bv = \bk_{\mathrm{sf}}^{\delta}(\bz,\bx)$ given in \eqref{eqn:predictive_safetyfilter}.  Assume a positive definite \textbf{$\delta$-bounded tracking function} $V_{\delta} : \X \times \R_{\geq 0} \to \R_{\geq 0}$ satisfying: 
\begin{align}
\label{eqn:deltatrackingfunction}
\rho \| \bPsi(\bx) -  \bk_{\mathrm{sf}}^{\delta}(\bPi(\bx),\bx) \|  ~ \leq ~ &   V_{\delta}(\bx)  & \rho\in\R_{>0} \\
 \dot{V}_{\delta}(\bx_t)  ~  \leq ~  & - \lambda V_{\delta} (\bx_t) + \mu,  & \mu \in \R_{\geq 0}, \:\:  \lambda \in \R_{> 0}. \nonumber
\end{align}
where $\mu$ is a steady state tracking error.   Note that when $\mu = 0$, $V_{\delta(\bx) \equiv 0}$ is a tracking function as in \eqref{eqn:trackingfunction}.  Bounded tracking functions, therefore, capture the more challenging case of tracking only to an error tube (dictated by $\mu$).  It is the addition of $\delta$ in the safety filter \eqref{eqn:predictive_safetyfilter} that guarantees safety.

\begin{theorem}[Existence of Predictive CBFs]\label{thm:predcbfs-exists}
Consider a RoM \eqref{eqn:RoM}, a closed-loop FoM \eqref{eqn:FoMcl} and a safe set $\C_{\mathrm{RoM}}$ satisfying Assumptions \ref{asm:projection} and \ref{asm:bounded}.  Let $\delta_0(\bx) \equiv \delta_0$ for $\delta_0 \in \R_{\geq 0}$. If there exists a $\delta_0$-bounded tracking function, $V_{\delta_0}$, with $\lambda \geq \alpha_x$ for any $\alpha_x > \alpha$, then the FoM is safe: 
\begin{equation}
\begin{array}{rcl}
\delta_0 & \geq & \frac{C_h \mu}{\alpha_x \rho}\\
\bx_0 \in \cal{S}_{\delta_0} & : = & \C_{\mathrm{FoM}} \cap \left\{ 
\bx \in \X  \colon (\alpha_x - \alpha) h(\bPi(\bx)) + \delta_0 - \frac{C_h}{\rho} V_{\delta_0}(\bx) \geq 0 \right\}
\end{array}
\:\: \implies  \:\: 
\begin{array}{r}
\bx_t \in \C_{\mathrm{FoM}} \\
\:\: \forall t \geq 0. 
\end{array} \nonumber
\end{equation}
Moreover, the set $\cal{S}_{\delta_0} \subset \cal C_{\mathrm{FoM}}$ and the constant function $\delta_0:\cal{S}_{\delta_0}  \to \R_{\geq 0}$ satisfies $\delta(\b x) \leq \delta_0$ for all $\b x \in \cal{S}_{\delta_0}$.  Therefore, $\delta : \C_{\mathrm{FoM}}^{\delta} \to \R_{\geq 0}$ exists with $\cal{S}_{\delta_0}  \subset \C_{\mathrm{FoM}}^{\delta}$. 
\end{theorem}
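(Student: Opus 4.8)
The plan is to reduce the statement to a pair of coupled scalar differential inequalities along the closed-loop FoM trajectory, close with a comparison argument for safety, and then obtain the ``moreover'' part by exhibiting the constant $\delta_0$ as a feasible point of the program \eqref{eqn:delta_optimization}. Fix $\alpha_x$ with $\alpha < \alpha_x \leq \lambda$ and write $\bz_t := \bPi(\bx_t)$; since $\delta_0$ is constant, the predictive safety filter $\bk_{sf}^{\delta_0}(\bz,\bx)$ in \eqref{eqn:predictive_safetyfilter} depends on $\bz$ alone, so I write it $\bk_{sf}^{\delta_0}(\bz)$. By Assumption \ref{asm:projection}, along the closed loop $\dot{\bz}_t = \bf(\bz_t) + \bg(\bz_t)\bPsi(\bx_t)$, hence
\[
\dot h(\bz_t) = \L_f h(\bz_t) + \L_g h(\bz_t)\,\bk_{sf}^{\delta_0}(\bz_t) + \L_g h(\bz_t)\big(\bPsi(\bx_t) - \bk_{sf}^{\delta_0}(\bz_t)\big).
\]
The first two terms are at least $-\alpha h(\bz_t) + \delta_0$ by the safety-filter constraint, and the last is at least $-C_h\,\|\bPsi(\bx_t) - \bk_{sf}^{\delta_0}(\bz_t)\| \geq -\frac{C_h}{\rho}V_{\delta_0}(\bx_t)$ by Assumption \ref{asm:bounded}, Cauchy--Schwarz, and the first line of \eqref{eqn:deltatrackingfunction}. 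I therefore obtain the key estimate $(\star)$: $\dot h(\bz_t) \geq -\alpha h(\bz_t) + \delta_0 - \frac{C_h}{\rho}V_{\delta_0}(\bx_t)$. Introducing $b(t) := (\alpha_x - \alpha) h(\bz_t) + \delta_0 - \frac{C_h}{\rho}V_{\delta_0}(\bx_t)$ --- the very function cutting out $\cal{S}_{\delta_0}$ evaluated along the trajectory, so $\bx_0 \in \cal{S}_{\delta_0}$ yields $h(\bz_0)\geq 0$ and $b(0)\geq 0$ --- estimate $(\star)$ rearranges algebraically to $\dot h(\bz_t) \geq -\alpha_x h(\bz_t) + b(t)$.

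Next I would bound $\dot b$. Writing $\dot b = (\alpha_x-\alpha)\dot h(\bz_t) - \frac{C_h}{\rho}\dot V_{\delta_0}(\bx_t)$, applying $(\star)$ and the decay bound $\dot V_{\delta_0}(\bx_t) \leq -\lambda V_{\delta_0}(\bx_t) + \mu$ from \eqref{eqn:deltatrackingfunction}, and then substituting $\frac{C_h}{\rho}V_{\delta_0}(\bx_t) = (\alpha_x-\alpha)h(\bz_t) + \delta_0 - b(t)$ to eliminate $V_{\delta_0}$, a short computation gives
\[
\dot b(t) \;\geq\; (\alpha_x-\alpha)(\lambda-\alpha_x)\,h(\bz_t) \;+\; \Big(\lambda\delta_0 - \frac{C_h\mu}{\rho}\Big) \;-\; (\lambda-\alpha_x+\alpha)\,b(t).
\]
This is where the hypotheses are consumed exactly: $\lambda \geq \alpha_x$ makes the coefficient $(\alpha_x-\alpha)(\lambda-\alpha_x)$ of $h(\bz_t)$ nonnegative, and $\delta_0 \geq \frac{C_h\mu}{\alpha_x\rho}$ together with $\lambda\geq\alpha_x$ makes the constant $\lambda\delta_0 - \frac{C_h\mu}{\rho}$ nonnegative; getting these thresholds to line up is the main computational point.

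To conclude I would compare $(h(\bz_t),b(t))$ with the solution $(p_\star(t),q_\star(t))$ of the linear system $\dot p_\star = -\alpha_x p_\star + q_\star$, $\dot q_\star = (\alpha_x-\alpha)(\lambda-\alpha_x)p_\star - (\lambda-\alpha_x+\alpha)q_\star + (\lambda\delta_0 - \frac{C_h\mu}{\rho})$ with $p_\star(0) = h(\bz_0) \geq 0$, $q_\star(0) = b(0) \geq 0$. Its system matrix is Metzler (the off-diagonal entries $1$ and $(\alpha_x-\alpha)(\lambda-\alpha_x)$ are nonnegative) and it is driven by a nonnegative constant, so $p_\star(t),q_\star(t)\geq 0$ for all $t \geq 0$; since the two displayed inequalities say precisely that $(h(\bz_t),b(t))$ is a supersolution of this quasi-monotone (cooperative) system with equal initial data, the comparison theorem for cooperative ODEs yields $h(\bz_t)\geq p_\star(t)\geq 0$ and $b(t)\geq q_\star(t)\geq 0$ on the maximal interval of existence (which, as is standard, one takes to be $[0,\infty)$, e.g. via a first-exit-time argument against leaving $\C_{\mathrm{FoM}}$, where Assumptions \ref{asm:projection}--\ref{asm:bounded} are in force). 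Hence $h(\bPi(\bx_t))\geq 0$, i.e. $\bx_t\in\C_{\mathrm{FoM}}$ for all $t\geq 0$, and the trajectory stays in $\cal{S}_{\delta_0}$. I expect this coupled comparison to be the only genuine obstacle: a bare scalar comparison on $h$ alone would force an explicit two-exponential solve and delicate boundary casework (equivalently a Nagumo argument on $\cal{S}_{\delta_0} = \C_{\mathrm{FoM}}\cap\{b\geq 0\}$ that must handle the corner $\{h=0,\,b=0\}$), whereas the cooperative structure of $(h,b)$ delivers nonnegativity directly.

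For the ``moreover'' claim, $\cal{S}_{\delta_0}\subseteq\C_{\mathrm{FoM}}$ holds by definition. Fixing $\bx\in\cal{S}_{\delta_0}$, I would verify that the constant $\delta_0$ is feasible for the program \eqref{eqn:delta_optimization} with $\bx_0 = \bx$ by taking $\bv^{\delta_0} := \bk_{sf}^{\delta_0}$: its second constraint holds by definition of the safety filter (feasibility of the $\delta_0$-buffered QP \eqref{eqn:predictive_safetyfilter} on $\cal{S}_{\delta_0}$ being inherited from $h$ being a CBF, under a mild nondegeneracy assumption on $\L_g h$ or largeness of $\cal V$), its third constraint is exactly the closed loop analyzed above, and its first constraint $\dot h(\bPi(\bx_t),\dot\bx_t)\geq -\alpha_x h(\bPi(\bx_t))$ for all $t\geq 0$ follows from $\dot h(\bz_t)\geq -\alpha_x h(\bz_t) + b(t)$ together with $b(t)\geq 0$ along the trajectory (the forward invariance of $\cal{S}_{\delta_0}$ just shown). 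Hence the optimal value obeys $\delta(\bx)\leq\delta_0$; as this holds for every $\bx\in\cal{S}_{\delta_0}$, the program is feasible on $\cal{S}_{\delta_0}$, so the predictor $\delta:\C_{\mathrm{FoM}}^{\delta}\to\R_{\geq 0}$ is well defined with $\cal{S}_{\delta_0}\subseteq\C_{\mathrm{FoM}}^{\delta}$, and (together with $h$ being a CBF for the RoM) $h$ is a predictive CBF.
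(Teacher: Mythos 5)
Your proof is correct, and its skeleton matches the paper's: the same lower bound $\dot h(\bPi(\bx_t),\dot\bx_t) \geq -\alpha h(\bPi(\bx_t)) + \delta_0 - \tfrac{C_h}{\rho}V_{\delta_0}(\bx_t)$ obtained from Assumption \ref{asm:projection}, the QP constraint, and Assumption \ref{asm:bounded}; the same auxiliary function (your $b(t)$ is exactly the paper's $H_{\delta_0}(\bx_t)$); and the hypotheses $\lambda\geq\alpha_x$ and $\delta_0\geq\tfrac{C_h\mu}{\alpha_x\rho}$ consumed at the same point. Where you genuinely diverge is in closing the argument: you substitute $\tfrac{C_h}{\rho}V_{\delta_0} = (\alpha_x-\alpha)h + \delta_0 - b$ to eliminate $V_{\delta_0}$, which couples $\dot b$ back to $h$ and forces you into a two-dimensional comparison against a cooperative (Metzler) linear system. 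The paper instead keeps $V_{\delta_0}$ in the bound, obtaining $\dot H_{\delta_0} \geq -\alpha H_{\delta_0} + \tfrac{C_h}{\rho}(\lambda-\alpha_x)V_{\delta_0} + \alpha_x\delta_0 - \tfrac{C_h\mu}{\rho}$, and then simply drops the last two groups of terms, which are nonnegative by positive definiteness of $V_{\delta_0}$ and the two hypotheses; this yields the self-contained scalar inequality $\dot H_{\delta_0}\geq-\alpha H_{\delta_0}$, hence forward invariance of $\{H_{\delta_0}\geq0\}$, and then a second scalar comparison $\dot h\geq-\alpha_x h$ gives $h\geq0$. So the "only genuine obstacle" you identify is one of your own making: the corner casework you feared is avoided not by quasimonotonicity but by noting that $V_{\delta_0}\geq0$ makes the $h$-free bound already closed in $b$ alone. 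Your route is valid but imports heavier machinery than needed. For the "moreover" clause, your direct feasibility argument (exhibit $\delta_0$ with $\bv^{\delta_0}=\bk_{sf}^{\delta_0}$ and use $b(t)\geq0$ to verify the first constraint of \eqref{eqn:delta_optimization}) is cleaner than the paper's chain of relaxed optimization problems and suffices for the stated claim $\delta(\bx)\leq\delta_0$, though the paper's version extracts the slightly sharper bound $\delta(\bx)\leq\tfrac{C_h\mu}{\alpha_x\rho}$. You are also right to flag QP feasibility of \eqref{eqn:predictive_safetyfilter} as an implicit assumption; the paper does not address it either.
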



The proof of this theorem, provided in the appendix, constructed an approximation to $\delta(\b x)$ that did not require direct knowledge of the FoM---only the corresponding performance of the tracking controller for that system via $V_{\delta}$.  Determining this is not easy, but can be estimated in practice.   This motivates the idea of directly learning $\delta$. Additionally, it will be seen that learning $\delta$ enables us to ensure safety even when all of the assumptions of the theorem do not hold, i.e., when $\lambda < \alpha_x$. 

\newsec{Simulation Results.}  
To demonstrate improvements of the predictive CBF over previous methods, we will consider a double integrator tracking velocity commands synthesized from single integrator dynamics. 
The RoM, FoM and tracking controller, i.e., the key elements in Fig. \ref{fig:architecture}, are therefore: 
\begin{equation}
\label{eqn:singledoubleint}
    \dot{\b z} = \b v \qquad \dot{\b x} = \begin{bmatrix}
       \b 0 & \b I \\ \b 0 & \b 0
    \end{bmatrix}\b x + \begin{bmatrix}
        \b 0 \\ \b I
    \end{bmatrix} \b u  \qquad 
    \begin{array}{c}
    \b\Pi(\b x) = \begin{bmatrix}
        \b I & \b 0
    \end{bmatrix}\b x \\
    \b\Psi(\b x) = \begin{bmatrix}
      \b 0   &  \b I
    \end{bmatrix}\b x
    \end{array}
    \quad\quad \b K(\b x, \b v) = -k_v\left(\b \Psi(\b x) - \b v\right)
\end{equation}

\vspace{-0.2cm}
\begin{example}[Single/Double Integrator]
\label{example:scalar}
\emph{
To examine properties of $\delta(\b x)$, consider the scalar RoM case of \eqref{eqn:singledoubleint}, i.e., $z \in \R, \b x \in \R^2$, i.e., the RoM is $\dot{z} = v$ and the FoM is $\dot{x}_1 = x_2$ and $\dot{x}_2 = u = - k_v(x_2 - v)$.  The safety constraint is to maintain nonnegative position, $h(z) = z$, i.e., $\C_{\mathrm{RoM}} = \{ z \in \R : h(z) \geq 0\}$ and $\C_{\mathrm{FoM}} = \{ \bx \in \R^2 : h(\bPi(\bx)) = x_1 \geq 0\}$.  Finally, fix the desired controller in the safety filter \eqref{eqn:predictive_safetyfilter} to be $k(z) = -\tfrac{1}{2}$, which attempts to drive the single integrator into the unsafe region. Following Theorem \ref{thm:predcbfs-exists}, we instantiate a predictive safety filter using the constant function $\delta_0(\bx)\equiv\delta_0$ and examine the properties of the closed-loop system for different values of $\delta_0$ (Fig. \ref{fig:ds_1d}, top row). Here, when $\delta_0=0$ (left), the conditions of Theorem \ref{thm:predcbfs-exists} are not satisfied, leading to safety violations even when starting within $\mc{S}_{\delta_0}$. On the other hand, increasing $\delta_0$ to $1$ ensures the conditions of this result are satisfied, leading to safety for initial conditions within $\mc{S}_{\delta_0}$ (middle). 
}
\end{example}

\begin{figure}
    \centering
    \includegraphics[width=\textwidth]{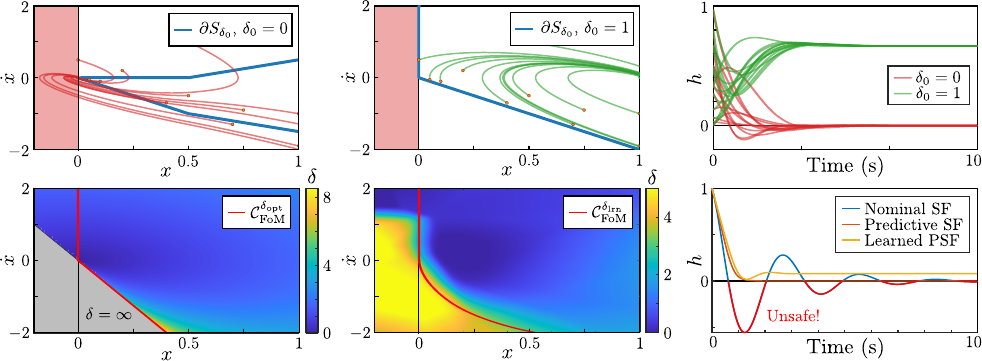}
    \caption{\vspace{-0.5cm}
    \textbf{Top:} Comparison of the set $\mc{S}_{\delta_0}$ from Theorem \ref{thm:predcbfs-exists} for different values of $\delta_0$ and trajectories of the closed-loop FoM in Example \ref{example:scalar}. 
    \textbf{Bottom:} Comparison between the optimized (left) and learned (middle) values of $\delta(\b x)$, as well as the corresponding $\cal C_{\mathrm{FoM}}^\delta$. (Right) compares the performance of these predictive CBFs to the nominal CBF without $\delta$.\vspace{-0.9cm}}
    \label{fig:ds_1d}
\end{figure}

\section{Learning Predictive CBFs}

Predictive CBFs are amiable to implementation, as facilitated through learning.  We first consider the algorithmic computation of PCBFs through simulation roll-outs.  While this effectively approximates PCBFs, the computational overhead prevents their rapid evaluation.  Learning the $\delta$ term circumvents this limitation.  We demonstrate the performance of the predictive CBF and its learned counterpart in simulation in this section, followed by experimentally in the following section.  

\newsec{Computing PCBFs.}  The optimization problem \eqref{eqn:delta_optimization} while difficult to solve exactly, lends itself to a receding horizon implementation. Here, we iteratively approximate solutions to \eqref{eqn:delta_optimization} via simulated rollouts to estimate $\delta(\b x)$ online.  In particular, for $\bx_0 \in \X$, CBF $h$, RoM controller $\bk$, and tracking controller $\bK$, define $\delta^i(\bx_0)$ through the following iterative algorithm: 
\begin{align} \label{eqn:rollout_alg}
    \delta^{i+1}(\bx_0) & := \max\left(0,\delta^{i}(\bx_0) - \eta e_{\delta_i}(\bx_0) \right) \quad  
    & \begin{array}{r}
    \delta^0(\bx_0) = 0, \quad i \in \{ 0, \ldots, \bar{N}\}, \\
    \bx_t' \approx \b x_0 + \int_0^t  \b F_{\mathrm{cl}}(\b x_\tau',\b k_{sf}^{\delta_{i}}(\b  \Pi (\b x_\tau'))) d \tau,  \\
    e_{\delta_i}(\bx_0) = \min_{t \in [0, T]} \dot{h}(\b x_t') + \alpha h(\b \Pi(\b x_t'))
    \end{array}
\end{align}
where $T \in \R_{> 0}$ is the rollout horizon, $\eta \in \R_{> 0}$ is a step size, $\bar{N} \in \mathbb{N}_{\geq 0}$ is the number of iterations, and $\bx_t'$ is an approximated (simulated) solution. 
At each iteration, $i$, we roll out a trajectory under our current value of $\delta = \delta^{i}$. We compute the maximum violation of the barrier condition, $e$, and then take a proportional step on $\delta^{i+1}$ to reduce this violation. This algorithm is iterated until convergence, approximating solutions to \eqref{eqn:delta_optimization}. Online, running to convergence is slow; we can instead take a real-time iterations approach \citep{diehl2005real, diehl2002efficient}, where we take only one step on the optimization problem before updating the system state (and tackling a \emph{new} optimization problem). Assuming safety violation does not change too rapidly, we can achieve convergence to approximately optimal $\delta$ as we progress through time; this may be enough to realize safety practically for complex systems online, as discussed in \cref{sec:hopper}. 

\begin{algorithm}[H]
    \SetAlgoLined
    \SetNlSty{}{}{:} 

    \caption{Predictive CBF Learning}
    \SetKwInOut{Input}{input} \SetKwInOut{Output}{output}
    \label{alg:episodic_learning}
    \Input{CBF $h$, \,RoM controller $\b k$, \,Tracking controller $\b K$, randomly initialized NN parameters $\b \theta_0$}
    \Output{CBF Robustification term $\delta_{\b \theta_{N_{\mathrm{epochs}}}}$}
    \vspace{-0.3cm}
    \BlankLine
    \For{$j = 1, \ldots, N_{\mathrm{epochs}}$}{
        $\;\cal{X}_j \leftarrow \left \{\b x_t^i : \dot{\b x}_t^i = \b F^{\delta_{\b \theta_{j-1}}}_{\mathrm{cl}}(\b x_t^i)\right \}_{i=1}^N$  \hspace{2.1cm}\tcp*[h]{Collect dataset under $\delta_{\b \theta}$}\;\\
        $\cal{E}_j\leftarrow \left\{ e^i_t : e^i_t = \min_{\tau \in [t, T]} \dot{h}(\b x_t^i)+\alpha h(\b \Pi(\b x_t^i))\right \}_{i=1}^N$  \hspace{0.5cm}\tcp*[h]{Compute violation}\;\\
        $\cal D_j \leftarrow \left\{\hat{\delta}_t^i : \hat{\delta}_t^i = \delta_{\b \theta_{j-1}}(\b x_t^i) - \eta_j e_t^i \right\}_{i=1}^N$  \hspace{1.5cm}\tcp*[h]{Compute new targets}\;\\
        $\b \theta_{j} = \argmin_{\b \theta} \cal L(\b \theta, \cal X_j, \cal D_j)$ \hspace{2.2cm}\tcp*[h]{Train NN on new targets}\;
    }
\end{algorithm} 
\vspace{-0.5cm}

\newsec{Learning PCBFs.}  In practice, solving the optimization \eqref{eqn:delta_optimization} requires several roll-outs of the system dynamics.
Computation time for these roll-outs may be a limiting factor when executing this method online; additionally, accuracy of the simulator may introduce some brittleness, as differences between the simulator dynamics and hardware could lead to safety violations. 
We aim to tackle both of these issues by proposing a learning algorithm to approximate $\delta(\b x)$ from simulation data, where we can utilize domain randomization to facilitate transfer to hardware. 

The learning algorithm, outlined in \cref{alg:episodic_learning}, takes in the RoM CBF and nominal controller, and the tracking controller; it trains the CBF robustifcation term $\delta_{\b \theta}$, a neural network with parameters $\b \theta$. For each iteration of the algorithm, the algorithm first collects a set of rollouts, $\cal X_j$, under the current robustification term, according to the dynamics: $\b F^{\delta}_{\mathrm{cl}}(\b x_t) = \b F_{\mathrm{cl}}(\b x_t, \b k_{\mathrm{sf}}^{\delta}(\b \Pi(\b x_t)))$.s
Next, the maximum safety violation (or margin) over a horizon of length $T$ is computed for each point along the trajectory. New targets for $\delta$, denoted $\hat{\delta}$, are computed by subtracting the violation from the current approximation of $\delta$, where $\eta_j$ is a step size parameter. Finally, a new neural network is fit to set of targets $\hat{\delta}$ by minimizing the loss function \eqref{eqn:loss},
where $L_{c, \sigma}$ is the check loss function \citep{koenker1978regression}, parameterized by $\sigma \in (0, 1)$.
Choosing $\sigma > 0.5$ will more heavily penalize cases where $x \leq y$; in our application, this will more heavily penalize underestimates of $\hat{\delta}$, as underestimates may not guarantee safety (while overestimates will). This algorithm is run iteratively, until convergence (or a maximum number of iterations). 

\begin{wrapfigure}[5]{L}{0.46\textwidth} 
  \vspace{-1.25cm} 
  \begin{center}
  \begin{scriptsize}
    \begin{align}\label{eqn:loss}
    \cal L(\b \theta, \cal X, \cal D) &= \frac{1}{|\cal D|}\sum_{\substack{\b x_t^i \in \cal X\\ \hat{\delta}_t^i  \in \cal D}} L_{c, \sigma}\left(\delta_{\b \theta}(\b x_t^i), \mathrm{clip}(\hat{\delta}_t^i, 0, \delta_{\mathrm{max}})\right) \nonumber\\ 
     L_{c, \sigma}(x, y) & = \begin{cases}
        \sigma (y - x) & x \leq y \\ (1 - \sigma)(x - y) & x > y
    \end{cases}
    \end{align}
  \end{scriptsize}
  \end{center}
  \vspace{-.4cm}
  \label{fig:loss}
\end{wrapfigure}
We implement this algorithm in pyTorch \citep{paszke2017automatic} running on an NVIDIA GTX 4080 GPU. Toy systems dynamics are implemented in pyTorch, while more complex robotic systems are simulated using GPU simulation capability in IsaacGym \citep{isaacgym} for massively parallelized data collection.

\begin{example}[Single/Double Integrator]
\emph{
Consider again the setup in Example \ref{example:scalar}.  
In this setting, \cref{fig:ds_1d} highlights the differences between the optimized and learned values of $\delta(\b x)$, as well as a comparison between evolutions of the system under the various safety filters. In the left pane, we see the result of solving the optimization problem \eqref{eqn:delta_optimization}. The gray regions indicate the optimization problem being unsolvable, and the corresponding safe set is outlined in red. Critically, note that the boundaries with the region where $\delta(\b x) = \infty$ are quite sharp; to make the learning problem tractable, we cap the target $\delta$ values at 5. Incapable of learning the sharp boundary of the optimization problem, the learned $\delta_{\b \theta}(\b x)$ is slightly conservative around the boundary (see right pane).
}
\end{example}

\begin{figure}
    \centering
    \includegraphics[width=\textwidth]{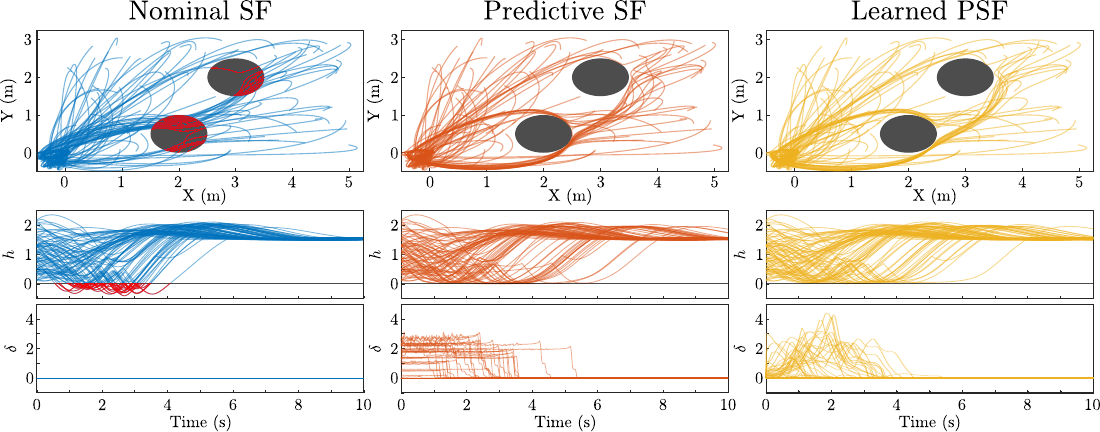}
    \caption{\vspace{-0.9cm}Comparison between the performance of the Predictive CBF (both Learned and Optimized) to the nominal RoM CBF Safety Filter. Both versions of the Predictive CBF ensure safety for the FoM, while the nominal safety filter violates the safety constraint.\vspace{-0.9cm}}
    \label{fig:ds_collision}
\end{figure}

\vspace{-0.6cm}
\begin{example}[Obstacle Avoidance]
\emph{
In a more realistic example, we now take the two-dimensional version of \eqref{eqn:singledoubleint} with $\b z \in \R^2, \b x \in \R^4$, with the goal performing collision avoidance---following from \citep{singletary2021comparative}. The nominal controller is a saturated proportional controller and the nominal barrier function is the distance to the closest obstacle, indexed by $i$:
\begin{equation*}
    \b k(\b z) = -\min\left\{\frac{v_{\mathrm{max}}}{k_p \|\b z\|}, 1\right\}k_p \b z \quad\quad h(\b z) = \min_i\left \{\|\b z - c_i\| - r_i\right\}_{i=1}^{N_{\mathrm{obs}}}
\end{equation*}
Note that this barrier function can be smoothed \citep{glotfelter2017nonsmooth}; we use the non-smooth version as the safe set is more intuitive to characterize and visualize. \cref{fig:ds_collision} displays the difference between the nominal safety filter, optimized PSF, and learned PSF. Because the used tracking controller is slower than the rate the system is allowed to approach the barrier, $\lambda < \alpha$, the assumptions of the nominal safety filter are not satisfied, and the FoM violates the safety constraint. Despite differing $\delta$ profiles, the optimized and learned trajectories are exceptionally similar; the learned version has a slightly smoother $\delta$ profile, but both remain safe over the entire trajectory. 
}
\end{example}

\section{Application to Hardware: 3D Hopping} \label{sec:hopper}
We apply the proposed method to the 3D hopping robot, ARCHER, shown in \cref{fig:hopper}.
The system contains $N=16$ states $\b x = [\b p^\top \; \b q^\top \; \b v^\top \; \b \omega^\top \; \dot{\b \theta}^\top ]^\top$, where $\b p \in \R^3,\b q \in \mathbb{S}^3$ the position and orientation, $\b v \in \R^3, \b \omega \in \R^3$ the linear and angular velocities, and $\dot{\b \theta} \in \R^3$ the flywheels velocities. 
The control input is torque applied to each of the three flywheels.
The foot spring is compressed to a preset distance during the flight phase, and released during the ground phase to maintain constant hop height.
More hardware details can be found \citep{csomay2023nonlinear}. The system dynamics are simulated in IsaacGym \citep{isaacgym} for data generation and learning. 

To accomplish a collision avoidance task, we design a CBF on the 2D single integrator, and track the resulting velocity commands. 
Tracking controllers for ARCHER have been studied in previous works \citep{csomay2024robust, csomay2023nonlinear}. Here we will use the same tracking controller as \citep{compton2025dynamic}, where a Raibert Heuristic \citep{raibert1984experiments} determines desired impact orientations, and these orientations are tracked by a geometrically consistent PD controller.

Using the same nominal controller and barrier function as in the double integrator example, we deploy both real-time iterations of \eqref{eqn:rollout_alg} (Predictive SF) and the Learned PCBF on hardware. To facilitate transfer to hardware (which has worse tracking performance than simulation, primarily due to uncertainty in the position of the center of mass, to which the system is highly sensitive), we domain randomize the robot's center of mass, and use the check loss to upper bound the distribution of resulting behaviors. A comparison with the nominal controller, are shown in \cref{fig:hopper}. Due to unaccounted for tracking errors, the nominal controller violates the safety constraint multiple times on its trajectory to the origin. However, both the predictive SF solved for via optimization and the learned PSF maintain safety over the entire trial. Experiment video is available \citep{video}.

Due to the computationally intensive rollout, the online optimization approach runs at only around 5Hz on hardware, as seen in the bottom right of \cref{fig:hopper}. At this rate, it does not capture how safety prediction varies over the course of a hop, and is reliant upon some native robustness of the CBF condition to enforce safety, as the convergence of $\delta$ to its optimal value is quite slow. On the other hand, the learned algorithm is evaluated at 100Hz on hardware; $\delta$ for the Learned PSF captures variation of safety over the course of a single hop, where delta increases during flight, as the hopper is unable to change its velocity until the next contact. During contact, $\delta$ drops dramatically as the hopper's velocity corrects toward the safe desired velocity. Because of its faster evaluation, the learning approach scales to complex systems and tracking controllers or more dynamic environments, where the computational cost of rollouts is high.

\begin{figure}
    \centering
    \includegraphics[width=\textwidth]{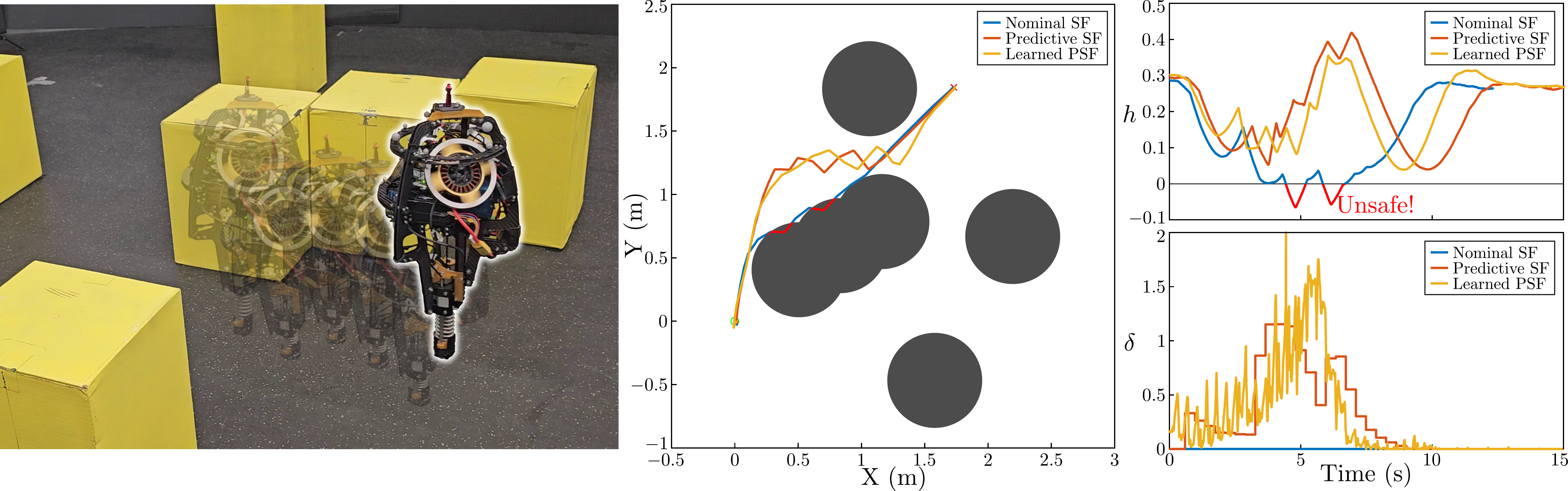}
    \caption{\vspace{-0.9cm}The 3D hopping robot ARCHER (left) navigates a cluttered environment using each of the nominal safety filter (SF), predictive safety filter (PSF) and learned PSF. (Middle) The trajectories of the hopper in space; (Right) values of $h$ and $\delta$ plotted over time for each approach. The nominal SF is unsafe, but both the PSF and LPSF maintain safety.\vspace{-0.9cm}}
    \label{fig:hopper}
\end{figure}

\section{Conclusion}
Synthesis of control barrier functions to ensure safety of complex nonlinear systems is difficult. As safety often depends only on a subset of the states, reduced order models are often used to synthesize safe behaviors, and the behavior of the reduced order model is then tracked on the full order system. We propose a novel, prediction based CBF to account for a very general class of tracking abilities; we robustify the safety filter applied to the RoM by a minimal amount to ensure safety of the FoM over a simulated horizon. As this optimization problem solving for the robustification factor, $\delta(\b x)$, can be prohibitively expensive to run online in real time, we instead learn this term. This method achieves safe navigation through cluttered environments on the 3D hopping robot ARCHER.

\acks{We would like to thank Noel Csomay-Shanklin for his help with experiments.  This research is supported in part by Boeing, and NSF CPS Award \#1932091.}

\bibliography{main}

\newpage
\appendix
\section{Proof of Theorem \ref{eqn:thmbackground}}
In this section, we provide a proof of Theorem \ref{eqn:thmbackground}, which, although already established in various forms, e.g., in \citep{Molnar2022,cohen2025safety}, is re-derived here to draw parallels and distinctions with the results of the present paper. For convenience, the theorem is restated below:
\begin{untheorem}
Consider a RoM \eqref{eqn:RoM}, a closed-loop FoM \eqref{eqn:FoMcl} and a safe set $\C_{\mathrm{RoM}}$ satisfying Assumptions \ref{asm:projection} and \ref{asm:bounded}. 
For the safety filter $\bv = \b k_{\mathrm{sf}}(\b z)$ in \eqref{eqn:safetyfilter}, if there exists a tracking function $V$ satisfying \eqref{eqn:trackingfunction} with $\lambda \geq \alpha_x$ for any $\alpha_x > \alpha$, then the FoM is safe: 
\begin{eqnarray*}
\bx_0 \in \cal{S} = \left\{ 
\bx \in \X  \colon (\alpha_x - \alpha) h(\bPi(\bx)) - \frac{C_h}{\rho} V(\bx) \geq 0 \right\}
\quad \implies \quad 
\bx_t \in \C_{\mathrm{FoM}}, \:\: \forall t \geq 0. 
\end{eqnarray*}
\end{untheorem}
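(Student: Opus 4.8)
The plan is to fold the two hypotheses---the CBF condition enforced by the safety filter and the exponential dissipation of the tracking function---into a single composite function on the FoM state whose nonnegativity is exactly membership in $\cal S$. Define
\[
b(\bx) \;:=\; h(\bPi(\bx)) - \frac{C_h}{\rho(\alpha_x - \alpha)}\, V(\bx),
\]
so that, since $\alpha_x > \alpha$, we have $\bx \in \cal S \iff b(\bx) \geq 0$. If I can show that $\dot b(\bx_t) \geq -\alpha\, b(\bx_t)$ along trajectories of the closed-loop FoM \eqref{eqn:FoMcl} driven by $\bv_t = \b k_{\mathrm{sf}}(\bPi(\bx_t))$, then the comparison lemma yields $b(\bx_t) \geq e^{-\alpha t} b(\bx_0) \geq 0$ whenever $\bx_0 \in \cal S$, and hence $h(\bPi(\bx_t)) \geq \tfrac{C_h}{\rho(\alpha_x - \alpha)} V(\bx_t) \geq 0$ because $V$ is nonnegative; i.e.\ $\bx_t \in \C_{\mathrm{FoM}}$ for all $t \geq 0$, which is the claim.

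The first step is to differentiate $h \circ \bPi$ along the closed loop. By Assumption \ref{asm:projection} the $\bG$-term drops and $\pdv{\bPi}{\bx}\big\vert_{\bx}\bF(\bx) = \bf(\bPi(\bx)) + \bg(\bPi(\bx))\bPsi(\bx)$, so $\tfrac{d}{dt} h(\bPi(\bx_t)) = \L_f h(\bPi(\bx_t)) + \L_g h(\bPi(\bx_t))\,\bPsi(\bx_t)$. Adding and subtracting $\L_g h(\bPi(\bx_t))\,\b k_{\mathrm{sf}}(\bPi(\bx_t))$, invoking the safety-filter constraint in \eqref{eqn:safetyfilter} at $\bz = \bPi(\bx_t)$, and bounding the leftover inner product by Cauchy--Schwarz using Assumption \ref{asm:bounded} together with $\|\bPsi(\bx) - \b k_{\mathrm{sf}}(\bPi(\bx))\| \leq V(\bx)/\rho$ from \eqref{eqn:trackingfunction} gives
\[
\frac{d}{dt} h(\bPi(\bx_t)) \;\geq\; -\alpha\, h(\bPi(\bx_t)) - \frac{C_h}{\rho}\, V(\bx_t).
\]
Combining this with $\dot V(\bx_t) \leq -\lambda V(\bx_t) \leq -\alpha_x V(\bx_t)$ (using $\lambda \geq \alpha_x$ and $V \geq 0$) and differentiating $b$, the two $V$-contributions collect to $\tfrac{C_h}{\rho}\bigl(\tfrac{\alpha_x}{\alpha_x - \alpha} - 1\bigr)V = \tfrac{C_h\alpha}{\rho(\alpha_x - \alpha)}V$, which is precisely what is needed to obtain $\dot b(\bx_t) \geq -\alpha\, b(\bx_t)$, closing the loop with the comparison-lemma argument above.

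The one delicate point---and the main obstacle to a fully rigorous statement---is that Assumption \ref{asm:bounded} only bounds $\|\L_g h(\bPi(\bx))\|$ on $\C_{\mathrm{FoM}}$, so a priori the differential inequality for $h\circ\bPi$ is valid only while the trajectory stays in $\C_{\mathrm{FoM}}$. I would resolve this with the standard maximal-interval bootstrap: on the maximal interval over which $\bx_t \in \C_{\mathrm{FoM}}$ the estimates hold, forcing $b(\bx_t) \geq 0$ and hence $h(\bPi(\bx_t)) \geq 0$ throughout that interval, which by continuity prevents the trajectory from exiting $\C_{\mathrm{FoM}}$; therefore the interval is all of $[0,\infty)$ (modulo forward completeness of the closed-loop FoM and enough regularity of $\b k_{\mathrm{sf}}$ for solutions to be well posed---caveats inherited from the CBF literature, which I would state rather than belabor).
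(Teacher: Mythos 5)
Your proposal is correct and follows essentially the same route as the paper: your $b(\bx)$ is just the paper's composite barrier $H(\bx) = (\alpha_x-\alpha)h(\bPi(\bx)) - \tfrac{C_h}{\rho}V(\bx)$ rescaled by $1/(\alpha_x-\alpha)$, and the chain of bounds (projection assumption, add-and-subtract of $\b k_{\mathrm{sf}}$, Cauchy--Schwarz with $C_h$ and the tracking bound, then the dissipation of $V$) matches the paper's computation of $\dot H \geq -\alpha H + \tfrac{C_h}{\rho}(\lambda-\alpha_x)V$ exactly. Your closing remark about the maximal-interval bootstrap for the domain of Assumption \ref{asm:bounded} is a legitimate care point the paper leaves implicit, but it does not change the argument.
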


\begin{proof}
    Under the assumption that $\alpha_x>\alpha$, we note that:
    \begin{eqnarray*}
        \bx\in\mc{S} \implies H(\bx) \geq 0 \implies (\alpha_x - \alpha) h(\bPi(\bx)) \geq \frac{C_h}{\rho} V(\bx) \implies h(\bPi(\bx)) \geq 0 \implies \bx\in\mc{C}_{\mathrm{FoM}},
    \end{eqnarray*}
    so that $\mc{S}\subset\mc{C}_{\mathrm{FoM}}$, where:
    \begin{eqnarray}\label{eqn:H-old-thm}
        H(\bx) \coloneqq (\alpha_x - \alpha) h(\bPi(\bx)) - \frac{C_h}{\rho} V(\bx).
    \end{eqnarray}
    Now, computing the derivative of $h$ along a trajectory of the closed-loop FoM \eqref{eqn:FoMcl} yields:
    \begin{align*}
        \dot{h}(\bPi(\bx_t), \dot{\bx}_t) = & \pdv{h}{\bz}\Big\vert_{\bPi(\bx_t)}\pdv{\bPi}{\bx} \Big \vert_{\bx_t} \bF(\bx_t) + \pdv{h}{\bz}\Big\vert_{\bPi(\bx_t)}\pdv{\bPi}{\bx} \Big \vert_{\bx_t} \bG(\bx_t)\bu \\
        = & \pdv{h}{\bz}\Big\vert_{\bPi(\bx_t)}\bf(\bPi(\bx_t)) + \bg(\bPi(\bx_t))\bPsi(\bx_t) \\ 
        = & \mc{L}_{\bf}h(\bPi(\bx_t)) + \mc{L}_{\bg}h(\bPi(\bx_t))\bPsi(\bx_t),
    \end{align*}
    where the second line follows from Assumption \ref{asm:projection}. Adding and subtracting $ \mc{L}_{\bg}h(\bPi(\bx))\bk_{\mathrm
    sf}(\bPi(\bx))$, where $\bk_{\mathrm
    sf}$ is from \eqref{eqn:safetyfilter}, to the above and lower bounding yields  (suppressing the dependence of time):
    \begin{align*}
        \dot{h}(\bPi(\bx), \dot{\bx}) = & \mc{L}_{\bf}h(\bPi(\bx)) + \mc{L}_{\bg}h(\bPi(\bx))\bk_{\mathrm
    sf}(\bPi(\bx)) + \mc{L}_{\bg}h(\bPi(\bx))\left[\bPsi(\bx) - \bk_{\mathrm
    sf}(\bPi(\bx)) \right] \\
    \geq & -\alpha h(\bPi(\bx)) - \|\mc{L}_{\bg}(\bPi(\bx))\|\cdot\|\bPsi(\bx) - \bk_{\mathrm
    sf}(\bPi(\bx))\| \\
    \geq & -\alpha h(\bPi(\bx)) - \frac
    {C_h}{\rho} V(\bx),
    \end{align*}
    where the final inequality follows from Assumption \ref{asm:bounded} and \eqref{eqn:trackingfunction}. Using the above, we may bound the derivative of $H$ along the closed-loop FoM as:
    \begin{align}
        \dot{H}(\bx) = & (\alpha_x - \alpha)\dot{h}(\bPi(\bx), \dot{\bx}) - \frac{C_h}{\rho}\dot{V}(\bx) \\
        \geq & -\alpha(\alpha_x - \alpha)h(\bPi(\bx)) -\alpha_x \frac
    {C_h}{\rho} V(\bx) + \alpha \frac
    {C_h}{\rho} V(\bx) + \lambda \frac
    {C_h}{\rho} V(\bx) \\
    = & -\alpha H(\bx) + \frac{C_h}{\rho}\left[\lambda - \alpha_x \right] V(\bx),\label{eqn:H-dot-bound-old-thm}
    \end{align}
    where the final equality follows from substituting in $H$ via \eqref{eqn:H-old-thm}. Provided that $\lambda \geq \alpha_x$, then $\dot{H}(\bx)\geq - \alpha H(\bx)$, which implies that $H$ is a barrier function \citep{ames2016control} for the closed-loop FoM \eqref{eqn:FoMcl} and that $\mc{S}$ is forward invariant. As $\mc{S}$ is forward invariant and $\mc{S}\subset\mc{C}_{\mathrm{FoM}}$, any initial condition $\bx_0\in\mc{S}$ ensures that $\bx_t\in\mc{C}_{\mathrm{FoM}}$ for all $t\geq0$, as desired.
\end{proof}

\section{Proof of Theorem \ref{thm:predcbfs-exists}}
In this section, we provide a proof of Theorem \ref{thm:predcbfs-exists}, restated here for convenience:
\begin{untheorem}
Consider a RoM \eqref{eqn:RoM}, a closed-loop FoM \eqref{eqn:FoMcl} and a safe set $\C_{\mathrm{RoM}}$ satisfying Assumptions \ref{asm:projection} and \ref{asm:bounded}.  Let $\delta_0(\bx) \equiv \delta_0$ for $\delta_0 \in \R_{\geq 0}$. If there exists a $\delta_0$-bounded tracking function, $V_{\delta_0}$, with $\lambda \geq \alpha_x$ for any $\alpha_x > \alpha$, then the FoM is safe: 
\begin{equation}
\begin{array}{rcl}
\delta_0 & \geq & \frac{C_h \mu}{\alpha_x \rho}\\
\bx_0 \in \cal{S}_{\delta_0} & : = & \C_{\mathrm{FoM}} \cap \left\{ 
\bx \in \X  \colon (\alpha_x - \alpha) h(\bPi(\bx)) + \delta_0 - \frac{C_h}{\rho} V_{\delta_0}(\bx) \geq 0 \right\}
\end{array}
\:\: \implies  \:\: 
\begin{array}{r}
\bx_t \in \C_{\mathrm{FoM}} \\
\:\: \forall t \geq 0. 
\end{array} \nonumber
\end{equation}
Moreover, the set $\cal{S}_{\delta_0} \subset \cal C_{\mathrm{FoM}}$ and the constant function $\delta_0:\cal{S}_{\delta_0}  \to \R_{\geq 0}$ satisfies $\delta(\b x) \leq \delta_0$ for all $\b x \in \cal{S}_{\delta_0}$.  Therefore, $\delta : \C_{\mathrm{FoM}}^{\delta} \to \R_{\geq 0}$ exists with $\cal{S}_{\delta_0}  \subset \C_{\mathrm{FoM}}^{\delta}$. 
\end{untheorem}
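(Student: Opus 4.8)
The plan is to mirror the proof of Theorem~\ref{eqn:thmbackground} almost verbatim, with the constant $\delta_0$ playing the role of a buffer that absorbs the steady‑state tracking error $\mu$. First I would introduce the candidate barrier function
\begin{equation*}
    H_{\delta_0}(\bx) \coloneqq (\alpha_x - \alpha)\, h(\bPi(\bx)) + \delta_0 - \frac{C_h}{\rho}\, V_{\delta_0}(\bx),
\end{equation*}
so that, by definition, $\cal{S}_{\delta_0} = \C_{\mathrm{FoM}} \cap \{\bx \in \X : H_{\delta_0}(\bx)\geq 0\}$; here $\cal{S}_{\delta_0} \subset \C_{\mathrm{FoM}}$ holds trivially because the intersection with $\C_{\mathrm{FoM}}$ is imposed explicitly (unlike in Theorem~\ref{eqn:thmbackground}, the extra $+\delta_0$ no longer lets the sign of $H$ alone force $h(\bPi(\bx))\geq 0$). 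It then suffices to show $\dot{H}_{\delta_0}(\bx) \geq -\alpha H_{\delta_0}(\bx)$ along the closed‑loop FoM driven by the predictive safety filter $\b k_{sf}^{\delta_0}$, since this certifies forward invariance of $\{H_{\delta_0}\geq 0\}$, hence of $\cal{S}_{\delta_0}$, and therefore $\bx_0 \in \cal{S}_{\delta_0} \Rightarrow \bx_t \in \C_{\mathrm{FoM}}$ for all $t\geq 0$ \citep{ames2016control}.

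Next I would differentiate $h$ along the closed‑loop FoM under $\b v = \b k_{sf}^{\delta_0}(\bPi(\bx),\bx)$. By Assumption~\ref{asm:projection}, exactly as in the proof of Theorem~\ref{eqn:thmbackground}, $\dot h = \L_f h(\bPi(\bx)) + \L_g h(\bPi(\bx))\bPsi(\bx)$; adding and subtracting $\L_g h(\bPi(\bx))\,\b k_{sf}^{\delta_0}(\bPi(\bx),\bx)$, invoking the predictive safety‑filter constraint of \eqref{eqn:predictive_safetyfilter} (whose right‑hand side is $-\alpha h(\bPi(\bx)) + \delta_0$), Assumption~\ref{asm:bounded}, and the first inequality of \eqref{eqn:deltatrackingfunction} gives $\dot h \geq -\alpha h(\bPi(\bx)) + \delta_0 - \frac{C_h}{\rho} V_{\delta_0}(\bx)$. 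Plugging this, together with $\dot V_{\delta_0} \leq -\lambda V_{\delta_0} + \mu$, into $\dot H_{\delta_0} = (\alpha_x-\alpha)\dot h - \frac{C_h}{\rho}\dot V_{\delta_0}$ and regrouping (substituting back the definition of $H_{\delta_0}$) yields
\begin{equation*}
    \dot H_{\delta_0}(\bx) \;\geq\; -\alpha H_{\delta_0}(\bx) + \Big(\alpha_x \delta_0 - \frac{C_h}{\rho}\mu\Big) + (\lambda - \alpha_x)\frac{C_h}{\rho} V_{\delta_0}(\bx).
\end{equation*}
Since $V_{\delta_0}\geq 0$, the hypothesis $\lambda \geq \alpha_x$ makes the last term nonnegative and the hypothesis $\delta_0 \geq \frac{C_h\mu}{\alpha_x\rho}$ makes the parenthesized term nonnegative, so $\dot H_{\delta_0} \geq -\alpha H_{\delta_0}$ and the safety claim follows.

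For the ``moreover'' statement I would check that the constant value $\delta_0$ is feasible for the optimization \eqref{eqn:delta_optimization} at every $\bx \in \cal{S}_{\delta_0}$, which immediately gives $\delta(\bx)\leq\delta_0$ and, in particular, that $\delta$ is finite there, i.e.\ $\cal{S}_{\delta_0}\subset\C_{\mathrm{FoM}}^{\delta}$. With $\b v^{\delta_0} = \b k_{sf}^{\delta_0}$, the second and third constraints of \eqref{eqn:delta_optimization} hold by construction; for the first, observe that along the rollout started at $\bx_0\in\cal{S}_{\delta_0}$ the forward invariance established above keeps $H_{\delta_0}(\bx_t)\geq 0$, which rearranges to $-\alpha h(\bPi(\bx_t)) + \delta_0 - \frac{C_h}{\rho}V_{\delta_0}(\bx_t) \geq -\alpha_x h(\bPi(\bx_t))$; combined with the lower bound on $\dot h$ from the previous paragraph this gives $\dot h(\bPi(\bx_t),\dot\bx_t)\geq -\alpha_x h(\bPi(\bx_t))$ for all $t\geq0$, which is precisely the first constraint.

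The main obstacle I anticipate is making the ``moreover'' step airtight rather than circular: it rests on the forward invariance proved in the first part and on the implicit well‑posedness of the predictive safety filter $\b k_{sf}^{\delta_0}$ (the CBF‑plus‑$\delta_0$ constraint must admit a solution in $\cal V$, which is exactly what positing a $\delta_0$‑bounded tracking function $V_{\delta_0}$ presupposes), together with the usual caveat that trajectories remain on the domains where $h$, $\bPi$, and $V_{\delta_0}$ are defined. Everything else is a routine recomputation of the barrier inequality of Theorem~\ref{eqn:thmbackground} with the extra bookkeeping for the two constants $\delta_0$ and $\mu$.
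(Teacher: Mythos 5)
Your proof is correct and follows essentially the same route as the paper's: the same candidate $H_{\delta_0}(\bx) = (\alpha_x-\alpha)h(\bPi(\bx)) + \delta_0 - \tfrac{C_h}{\rho}V_{\delta_0}(\bx)$, the same lower bound on $\dot h$ via the predictive filter constraint and the $\delta_0$-bounded tracking function, the same regrouping to get $\dot H_{\delta_0}\geq-\alpha H_{\delta_0}$, and a feasibility argument for the ``moreover'' claim (yours exhibits $\delta_0$ directly as feasible for \eqref{eqn:delta_optimization}, whereas the paper bounds the optimum of a restricted problem by $\tfrac{C_h\mu}{\alpha_x\rho}\leq\delta_0$; both are fine). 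One ordering caveat: forward invariance of $\{H_{\delta_0}\geq0\}$ does not by itself yield forward invariance of $\mc{S}_{\delta_0}$ or membership in $\C_{\mathrm{FoM}}$ --- precisely because, as you note, $H_{\delta_0}\geq0$ no longer forces $h\geq0$ --- so the bridge you supply only in your third paragraph ($H_{\delta_0}(\bx_t)\geq0$ together with your $\dot h$ bound gives $\dot h\geq-\alpha_x h$ along the trajectory, and $h(\bPi(\bx_0))\geq0$ then propagates forward by the comparison lemma) must already be invoked to conclude safety in the first part; this is exactly the role of the intermediate set $\mc{S}=\{\dot h+\alpha_x h\geq0\}$ in the paper's proof.
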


\begin{proof}
    Similar to the proof of Theorem \ref{eqn:thmbackground}, under Assumption \ref{asm:projection}, the derivative of $h$ along the closed-loop FoM \eqref{eqn:FoMcl} with $\bv=\bk_{\mathrm{sf}}^{\delta_0}(\bPi(\bx), \bx)$ is:
    \begin{align}
        \dot{h}(\bPi(\bx_t)), \dot{\bx}_t) = \mc{L}_{\bf}h(\bPi(\bx_t)) + \mc{L}_{\bg}h(\bPi(\bx_t))\bPsi(\bx_t).
    \end{align}
    Adding and subtracting $ \mc{L}_{\bg}h(\bPi(\bx))\bk_{\mathrm
    sf}^{\delta_0}(\bPi(\bx), \bx)$, where $\bk^{\delta_0}_{\mathrm
    sf}$ is from \eqref{eqn:predictive_safetyfilter}, to the above and lower bounding yields (suppressing the dependence of time):
    \begin{align}
        \dot{h}(\bPi(\bx), \dot{\bx}) = & \mc{L}_{\bf}h(\bPi(\bx)) + \mc{L}_{\bg}h(\bPi(\bx))\bk^{\delta_0}_{\mathrm
    sf}(\bPi(\bx), \bx) + \mc{L}_{\bg}h(\bPi(\bx))\left[\bPsi(\bx) - \bk^{\delta_0}_{\mathrm
    sf}(\bPi(\bx),\bx) \right] \nonumber \\
    \geq & -\alpha h(\bPi(\bx)) + \delta_0 - \|\mc{L}_{\bg}(\bPi(\bx))\|\cdot\|\bPsi(\bx) - \bk^{\delta_0}_{\mathrm
    sf}(\bPi(\bx),\bx)\| \nonumber \\
    \geq & -\alpha h(\bPi(\bx)) + \delta_0 - \frac
    {C_h}{\rho} V_{\delta_0}(\bx), \label{eqn:hdot-bound-new-thm}
    \end{align}
    where the final inequality follows from Assumption \ref{asm:bounded} and \eqref{eqn:trackingfunction}. Now, consider the function:
    \begin{equation}
        H(\bx) = \dot{h}(\bPi(\bx)), \dot{\bx}) + \alpha_x h(\bPi(\bx)) = \mc{L}_{\bf}h(\bPi(\bx)) + \mc{L}_{\bg}h(\bPi(\bx))\bPsi(\bx) + \alpha_x h(\bPi(\bx)),
    \end{equation}
    which defines a set:
    \begin{equation}
        \mc{S} \coloneqq \left\{\bx\in\mc{X}\,:\,H(\bx) \geq 0 \right\}.
    \end{equation}
    While $\mc{S}$ is not necessarily a subset of $\mc{C}_{\mathrm{FoM}}$, we note that if $\bx_t\in\mc{S}$ for all $t\geq0$, then:
    \begin{equation*}
        \dot{h}(\bPi(\bx_t)), \dot{\bx}_t) = \mc{L}_{\bf}h(\bPi(\bx_t)) + \mc{L}_{\bg}h(\bPi(\bx_t))\bPsi(\bx_t) \geq - \alpha_x h(\bPi(\bx_t)),
    \end{equation*}
    for all $t\geq 0$. Hence, if $\bx_t\in\mc{S}$ for all $t\geq0$ and $\bx_0\in\mc{S}\cap\mc{C}_{\mathrm{FoM}}$, it follows from \cite{ames2016control} that $h(\bPi(\bx_t))\geq0$ for all $t\geq0$. To this end, note that:
    \begin{align}
        H(\bx) = &  \dot{h}(\bPi(\bx)), \dot{\bx}) + \alpha_x h(\bPi(\bx)) \\
        \geq & -\alpha h(\bPi(\bx)) + \delta_0 - \frac
    {C_h}{\rho} V_{\delta_0}(\bx) + \alpha_x h(\bPi(\bx)),
    \end{align}
    where the inequality follows from \eqref{eqn:hdot-bound-new-thm}. Defining:
    \begin{align}
        H_{\delta_0}(\bx) \coloneqq & (\alpha_x - \alpha) h(\bPi(\bx)) + \delta_0 - \frac
    {C_h}{\rho} V_{\delta_0}(\bx) \label{eqn:H-delta} \\
    \mc{G}_{\delta_0} = & \left\{\bx\in\mc{X}\,:\, H_{\delta_0}(\bx)\geq 0\right\},
    \end{align}
    and using the observation that $H(\bx)\geq H_{\delta_0}(\bx)$, we have that $\mc{G}_{\delta_0}\subset\mc{S}$ as $H_{\delta_0}(\bx)\geq0$ implies $H(\bx)\geq0$. Hence, if $\mc{G}_{\delta_0}$ is forward invariant, then $\bx_0\in\mc{S}_{\delta_0}=\mc{G}_{\delta_0
    }\cap\mc{C}_{\mathrm{FoM}}$ implies that $h(\bPi(\bx_t))\geq0$ for all $t\geq0$. To show this, we compute:
    \begin{align}
        \dot{H}_{\delta_0}(\bx) = & (\alpha_x - \alpha)\dot{h}(\bPi(\bx), \dot{\bx}) - \frac{C_h}{\rho}\dot{V}(\bx) \nonumber \\
        \geq & -\alpha(\alpha_x - \alpha)h(\bPi(\bx)) + \alpha_x\delta_0 - \alpha\delta_0 -\alpha_x \frac
        {C_h}{\rho} V(\bx) + \alpha \frac
        {C_h}{\rho} V(\bx) + \lambda \frac
        {C_h}{\rho} V(\bx) - \frac{C_h\mu}{\rho} \nonumber \\
    = & -\alpha\left[(\alpha_x - \alpha)h(\bPi(\bx)) + \delta_0 - \frac
        {C_h}{\rho} V(\bx) \right] + \alpha_x \delta_0  -\alpha_x \frac
        {C_h}{\rho} V(\bx) + \lambda \frac
        {C_h}{\rho} V(\bx) - \frac{C_h\mu}{\rho} \nonumber \\
        = & -\alpha H_{\delta_0}(\bx) + \frac{C_h}{\rho}\left[\lambda - \alpha_x \right] V(\bx) + \alpha_x\delta_0 - \frac{C_h\mu}{\rho},
    \end{align}
    where the first inequality follows from \eqref{eqn:hdot-bound-new-thm} and \eqref{eqn:deltatrackingfunction} and the final equality follows from \eqref{eqn:H-delta}. Now, provided that:
    \begin{equation*}
        \lambda \geq \alpha_x,\quad\quad \delta_0 \geq \frac{C_h\mu}{\alpha_x\rho},
    \end{equation*}
    then $\dot{H}_{\delta_0}(\bx)\geq -\alpha H_{\delta_0}(\bx)$ so that $H_{\delta_0}$ is a barrier function for the closed-loop FoM \eqref{eqn:FoMcl}, implying that $\mc{G}_{\delta_0}$ is forward invariant \citep{ames2016control}. As $\mc{G}_{\delta_0}$ is forward invariant and $\mc{G}_{\delta_0}\subset\mc{S}$ it follows that $\bx_0\in\mc{S}_{\delta_0}$ implies that $\bx_t\in\mc{C}_{\mathrm{FoM}}$ for all $t\geq0$, thereby ensuring safety of the FoM.

    To show that $\delta(\bx)$ exists, we will show it is well defined on $\mc{S}_{\delta_0}$ and can be bounded by $\delta_0$.   To this end, let $\bx \in \mc{S}_{\delta_0}$.  Then the optimization problem \eqref{eqn:delta_optimization} can be stated using the notation of this proof, and bounded according to: 
    \begin{eqnarray}
    \delta(\bx) & =  & 
    \left\{\begin{array}{rlr}
    \min _{\delta \in \R_{\geq 0}} &  \quad \delta  & \\
\mathrm{s.t.}    &  
\quad H(\b x_t) \geq 0,    &   \forall t \geq 0  \\
 &  
 \quad 
\L_f h(\b z)  + \L_g h(\b z)  \b v^{\delta}(\b z) \geq -\alpha h(\b z) + \delta,   & \qquad \b z = \Pi(\b x)  \\
 & 
 \quad \dot{\b  x}_t = \b  F_{\mathrm{cl}}(\b x_t, \b K(\b x_t,\b v^{\delta}(\b \Pi(\b x_t))),  &  \b x_0 = \b x .  
 \end{array} \right. \nonumber\\
 & \leq & 
 \left\{\begin{array}{rlr}
    \min _{\delta \in \R_{\geq 0}} &  \quad \delta  & \\
\mathrm{s.t.}    &  
\quad \delta \geq \frac{C_h\mu}{\alpha_x\rho} \\
& \quad  H_{\delta}(\b x_t) \geq 0,    &   \forall t \geq 0  \\
 &  
 \quad 
\L_f h(\b z)  + \L_g h(\b z) \bk^{\delta}_{\mathrm
    sf}(\bPi(\bx), \bx) \geq -\alpha h(\b z) + \delta,   & \qquad \b z = \Pi(\b x)  \\
 & 
 \quad \dot{\b  x}_t = \b  F_{\mathrm{cl}}(\b x_t, \b K(\b x_t,\bk^{\delta}_{\mathrm
    sf}(\bPi(\bx), \bx)),  &  \b x_0 = \b x .  
 \end{array} \right. \nonumber\\
 & = & 
 \left\{\begin{array}{rlr}
    \min _{\delta \in \R_{\geq 0}} &  \quad \delta  & \\
\mathrm{s.t.}    &  
\quad \delta \geq \frac{C_h\mu}{\alpha_x\rho}   
 \end{array} \right. \nonumber\\
 & = & \frac{C_h\mu}{\alpha_x\rho} \nonumber\\
 & \leq & \delta_0 \nonumber
    \end{eqnarray}
Therefore, $\delta(\bx) \in [0,\delta_0]$ for all $\bx \in \mc{S}_{\delta_0}$.  As a result, it has a feasible solution and is well defined for all $\bx \in \mc{S}_{\delta_0}$.  This also implies that $\mc{S}_{\delta_0}\subseteq\mc{C}^{\delta}_{\mathrm{FoM}}$ as desired.


\end{proof}

\end{document}